\newtheorem{thm}{Theorem}
\newtheorem{exe}[thm]{Example}
\newtheorem{corol}[thm]{Corollary}
\newtheorem{ass}[thm]{Assumption}
\newtheorem{defin}[thm]{Definition}
\newtheorem{cla}[thm]{Claim}
\newtheorem{rem}[thm]{Remark}
\newtheorem{lem}[thm]{Lemma}
\newtheorem{prop}[thm]{Proposition}
\newtheorem{fct}{Fact}
\newenvironment{lemma}{\begin{lem}}{\hfill $\square$ \end{lem}}
\newenvironment{corollary}{\begin{corol}}{\hfill $\square$ \end{corol}}
\newenvironment{remark}{\begin{rem}\rm }{\hfill $\bullet$ \end{rem}}
\newenvironment{theorem}{\begin{thm}}{\hfill $\square$ \end{thm}}
\newtheorem{prob}{Problem}
\newcommand{\R}{\mathbb{R}}
\newcommand{\cA}{\mathcal A}
\newcommand{\N}{\mathbb N}
\newcommand{\G}{\mathcal{G}}
\newcommand{\x}{\times}
\newcommand{\removelatexerror}{\let\@latex@error\@gobble}
\DeclareMathOperator{\dom}{dom}
\DeclareMathOperator{\col}{col}
\def\endthebibliography{%
	\def\@noitemerr{\@latex@warning{Empty `thebibliography' environment}}%
	\endlist
}
\title{\LARGE \bf
A Hybrid Control Algorithm for Gradient-Free Optimization using Conjugate Directions
}
\author{Alessandro Melis\thanks{Alessandro Melis and Lorenzo Marconi are with CASY-DEI, University of Bologna, Bologna, Italy.
		{\tt\small \{alessandro.melis4, lorenzo.marconi\}@unibo.it}}, Ricardo G. Sanfelice\thanks{Ricardo G. Sanfelice is with the Department of Electrical and Computer Engineering, University of California, Santa Cruz. {\tt\small ricardo@ucsc.edu}} and Lorenzo Marconi\footnotemark[1]
\date{}
}
\begin{document}
%
%
%
%
%

\maketitle
\thispagestyle{empty}
\pagestyle{empty}

\begin{abstract}
	
	The problem of steering a particular class of $n$-dimensional continuous-time dynamical systems towards the minima of a function without gradient information is considered. We propose an hybrid controller, implementing a discrete-time Direct Search algorithm based on conjugate directions, able to solve the optimization problem for the resulting closed loop system in an almost global sense. Furthermore, we show that Direct Search algorithms based on asymptotic step size reduction are not robust to measurement noise, and, to achieve robustness, we propose a modified version by imposing a lower bound on the step size and able to achieve robust practical convergence to the optimum. In this context we show a bound relating the supremum norm of the noise signal to the step size by highlighting a trade-off between asymptotic convergence and robustness.

\end{abstract}

%

\section{Introduction}
In this paper we study the problem of steering a particular class of dynamical systems towards the minimum of an objective function, assumed to not be known but whose measurements are available at fixed intervals of time. We consider continuous-time dynamical systems that can be steered, by a known input, between any two points of the state space. Examples of such systems are completely controllable linear time-invariant systems, as well as nonlinear systems whose reachable set after time $T>0$, for all $T>0$, is the whole state space, e.g. the Dubin's vehicle (\cite{Dubin}).

The problem at hand has been tackled in the literature with a variety of approaches, mostly related to source-seeking applications. In \cite{Burian} a gradient descent method is implemented from a least-squares approximation of the gradient, and combined with an exploration phase based on a simplex algorithm, in order to steer an autonomous underwater vehicle to the deepest part of a pond, or locate hydrothermal vents. A similar problem is solved in a multi-agent framework in \cite{Bachmayer}, where, instead, local gradient measurements are assumed. In \cite{Azuma} a modified version of the simultaneous-perturbation stochastic approximation is proposed in order to recursively compute directions of exploration, and in \cite{Cochran} an extremum seeking controller is adopted assuming continuous availability of the measurements of the objective function.

In \cite{Sanf} (see also \cite{Sanf2} and \cite{Sanf3}) the source-seeking problem is solved by a hybrid controller based on the Recursive Smith-Powell (RSP) algorithm. The latter is an optimization algorithm that, through a series of line minimizations, sequentially compute a set of conjugate directions. For convex quadratic functions, it ensures to reach a neighborhood of the minimizer in a finite amount of line minimizations.

The classic RSP implementation, as in \cite{Sanf}, uses discrete line minimizations with fixed step size, able to achieve practical stability of a set of minimizers for the 2-dimensional convex quadratic case.
In \cite{Coope1999} an extension of the RSP was proposed in the general context of continuously differentiable functions. By using a decreasing step size asymptotically converging to zero, their algorithm ensures asymptotic convergence to a stationary point. While some robustness results of the RSP algorithm where shown in \cite{Sanf}, no results are present regarding the algorithm in \cite{Coope1999}, and in particular for the more general class of Direct Search methods. 

In this paper we study the class of Direct Search methods, to which the RSP algorithm belongs, which are optimization algorithms that minimize (or maximize) an objective function without using (or estimating) derivative information of any order of the objective function (see \cite{Lewis2000} for an overview). In particular we propose a direct search algorithm combining the results of \cite{Coope1999} and of \cite{Kolda2003} and \cite{Lucidi2002} in order to achieve, contrary to the RSP algorithm, asymptotic convergence to the set of minima. Due to the inherent discrete dynamics of the algorithm and the continuous dynamics of the underlying dynamical system, on the wake of \cite{Sanf}, the controller is implemented by relying on the hybrid systems framework of \cite{Teel}. The proposed hybrid controller addresses the optimization problem of an $n$-dimensional continuously differentiable function with a set of global minima, and possibly isolated local maxima, and guarantees almost global asymptotic stability of the set of minima. 

As our main focus is developing robust controllers, we also show that asymptotic Direct Search methods based on asymptotic step size reduction are in general not robust to measurement noise. Thus we propose a robust algorithm, addressing $n$-dimensional objective functions (including the results of \cite{Sanf} as a special case), highlighting that a trade-off between asymptotic convergence and robustness is mandatory.

\textit{Notation}: $\R$ denotes the set of real numbers, and $\R_{\geq 0}:=[0,\infty)$ and $\R_{\geq 1}:=[1,\infty)$. We let $e$ denote Euler's number. We denote by $|\cdot|$ the absolute value of a scalar quantity and $\|\cdot\|$ the vector 2-norm. 
For a scalar function $f:\R^n\to\R$, we denote as $\nabla f:\R^n\to\R^n$ the gradient of $f$.
Given a nonempty set $\cA\in\R^n$ and $\varepsilon>0$, we denote as $\varepsilon\mathbb{B}(\cA)$ the set $\{x\in\R^n:\|x\|_\cA<\varepsilon \}$, where $\|x\|_\cA:=\inf_{y\in\cA}\|x-y\|$. 
A set valued mapping $f$ from $\R^n$ to $\R^m$ is denoted as $f:\R^n\rightrightarrows\R^m$.
Define a hybrid system in $\R^n$ as the 4-tuple $\mathcal{H}=(C,F,D,G)$, with $C\subset\R^n$ the flow set, $F:\R^n\rightrightarrows\R^n$ the flow map, $D\subset\R^n$ the jump set, and $G:\R^n\rightrightarrows\R^n$ the jump map. Solutions to hybrid systems are defined on \textit{hybrid time domains} (see \cite{Teel} for more details) parameterized by a continuous time variable $t\in\R_{\geq 0}$ and a discrete time variable $j$, keeping track, respectively of the continuous and discrete evolution. We denote as $\dom \phi\subset \R_{\geq 0}\x\N$ the hybrid time domain corresponding to the solution $\phi$.
We say that for a hybrid system $\mathcal{H}$ with state $x\in\R^n$, the set $\mathcal{A}\subset \R^n$ is: \textit{stable} if for all $\epsilon>0$, there exists $\delta_{\epsilon}>0$ such that $x(0,0)\in\delta\mathbb{B}(\mathcal{A})$ implies $x(t,j)\in\epsilon\mathbb{B}(\mathcal{A})$ for all $(t,j)\in\dom x$; \textit{globally attractive} if $\|x(t,j)\|_{\mathcal{A}}$ is bounded and $\lim_{t+j\to \infty}\|x(t,j)\|_{\mathcal{A}}=0$, with $(t,j)\in\dom(x)$; \textit{globally asymptotically stable} if it is both stable and globally attractive; \textit{almost globally attractive} when it is globally attractive from all initial conditions apart from a set of measure zero; \textit{almost globally asymptotically stable} if it is both stable and almost globally attractive; \textit{semiglobally practically asymptotically stable} on the parameter $\theta\in\Theta\subset\R^m$, with $m>0$, if, assuming $\mathcal{H}$ complete and dependent on $\theta$, for any $\epsilon_1>\epsilon_2>0$ and there exist $\delta>0$ and $\Theta^\star\subset\Theta$ such that for all $\theta\in\Theta^\star$, $x(0,0)\in\delta\mathbb{B}(\mathcal{A})$ implies $x(t,j)\in\epsilon_1\mathbb{B}(\mathcal{A})$ and $\lim_{t+j\to \infty} \|x(t,j)\|_{\epsilon_2\mathbb{B}(\mathcal{A})}=0$ for all $(t,j)\in\dom x$.


\section{Problem Formulation}
In this paper we tackle the following optimization problem
\begin{prob}\textit{Minimize a function $f:\R^n\to\R$, namely
	\begin{equation}\label{MinPro}
	\min_{x\in\R^n} \ f(x),
	\end{equation}
	subject to the dynamics
	\begin{equation}\label{PMV}
	\dot \xi=\varphi(\xi,u) \quad\quad\quad\quad \xi=\col(x,\zeta)\in\R^{n+l},u\in\R^m.
	\end{equation}
}
\end{prob}
The state variables $x$ represent the variables involved in the optimization problem, while $\zeta$ represents other possible states. 

For simplicity we consider $\phi:\R^{n+l}\x\R^m\to\R^{n+l}$ to be continuously differentiable in $\xi$ and $u$. Moreover, given $\tau^\star>0$, we assume that for each $x_0$ and $x_f$ in $\R^n$, with $x_0\neq x_f$, there exists $t\mapsto u(t)$ such that the solution to $\dot \xi=\varphi(\xi,u(t))$ from $\xi_0=(x_0,\cdot)$, reaches $\xi_f=(x_f,\cdot)$ after $\tau^\star$ seconds. We assume that for each bounded input $\|u(t)\|\leq \bar{u}>0$ for all $t\geq 0$, $\zeta(t)$ is bounded for all $t\geq 0$. The class of systems represented by \eqref{PMV}, includes, for example, point-mass vehicles ($\xi=x$, with $x$ representing the position) and Dubin's vehicles ($\xi=\col(x^T,\zeta)$, with $x$ and $\zeta$ representing position and orientation). 

Moreover we make the following assumptions on $f$:
\begin{itemize}[leftmargin=8.5mm]
	\item[(A0)] $f$ is continuously differentiable, lower bounded and it is not assumed to be known, but sampled measurements of it are supposed to be available every $\tau^\star>0$, with $\tau^\star$ a tunable parameter;
	\item[(A1)] the set $\{x\in\R^n:\nabla f(x)=0\}$ of critical points of $f$ is such that every local minimum is also a global minimum (i.e. all local minima share the same objective function value), every local maximum is an isolated point and $f$ is analytic at every local maximum, and there are no saddle points;
	\item[(A2)] the sublevel sets of $f$, namely the sets $\mathcal{L}_f(c):=\{x\in\R^n: f(x)\leq c\}$, are compact for all $c\in\R$.
\end{itemize}

Assumptions (A0) and (A2) are standard for Direct Search methods, see \cite{Coope1999}, \cite{Kolda2003} and \cite{Lucidi2002}. Assumption (A0) can be relaxed by considering $f$ to be locally Lipschitz, as shown in \cite{Kolda2003} and \cite{NSTeel}, which requires the use of generalized gradients for analysis.

The reason for the particular structure of the set of critical points assumed in (A1) stems from the fact that our goal is to prove and guarantee convergence to the set of minima. While the assumptions on the value of the local minima is considered to simplify the structure of the problem, without the other assumptions on local maxima and saddle points, Direct Search algorithms, and our proposed controller derived from it, only guarantee convergence to the set of critical points.

Notice that, contrary to \cite{Sanf}, no convexity assumptions have been made on the cost function.

\section{The RSP and the Proposed Algorithm}
In this section we will briefly introduce the classic RSP algorithm as proposed by \cite{Smith} and \cite{Powell1964}, its hybrid implementation in \cite{Sanf} and the algorithm that we propose as an extension of the RSP.

\subsection{Background}
Throughout the paper we call \textit{line minimization} any procedure that, given a function, a direction and a point, explores the line defined by the direction applied to the point, and returns the position of the minimum, or point in a neighborhood of it, of the function along the line.

We adjective a line minimization as \textit{exact} when the minimum along the explored direction is exactly reached, and as \textit{discrete} when the line minimization is an iterative procedure that explores at each iteration a new point at distance $\Delta>0$ (fixed or changing at each iteration), called \textit{step size}, from the previously explored one. A discrete line minimization terminates when the function value of the newly explored point didn't decrease enough with respect to the function value at the last explored one.

Given a set $\G\subset\R^n$ of linearly independent directions spanning $\R^n$, the classic $RSP$ sequentially computes exact line minimizations along the directions in $\G$ in order to minimize the cost function
. Moreover, every $n$ line minimizations, a new search direction $d_{new}\in\R^n$ is computed by exploiting the $Parallel\ Subspace\ Property$ (see Theorem 4.2.1 in \cite{Roger}) and the set $\G$ is updated accordingly. 

For a convex quadratic function with Hessian matrix $H$, the newly computed direction $d_{new}$ is conjugate, by the Parallel Subspace Property, to the last $n-1$ directions in $\G$, i.e. such that $d_{new}^THd_i=0$, with $d_i\in\G$ and $i=1,...,n-1$. 

The property of conjugacy of directions for a convex quadratic function, implies that the line minimization along one direction is independent of the line minimizations along the other directions. Thus, given a set of $n$ conjugate directions for a convex quadratic function from $\R^n$ to $\R$, the minimum will be reached after $n$ line minimizations, each along a different conjugate direction. By recursively computing a set of conjugate directions,
the RSP algorithm reaches the minimum of a convex quadratic function, starting from a set of linearly independent directions, in at most $n^2$ line minimizations. This property is usually denoted as \textit{quadratic termination property}.

The version of the RSP considered in \cite{Sanf} is constrained to a 2-dimensional search space and adopts discrete line minimizations with fixed step size and an additional exploration step based on a rational rotation, whose aim is to prevent the algorithm remaining stuck for \enquote{bad} initializations. Asymptotic convergence of the algorithm to a neighborhood of the minimum, function of the step size, is proved.
\begin{figure}
	\begin{subfigure}{1\textwidth}
		\centering
		\includegraphics[scale=0.72]{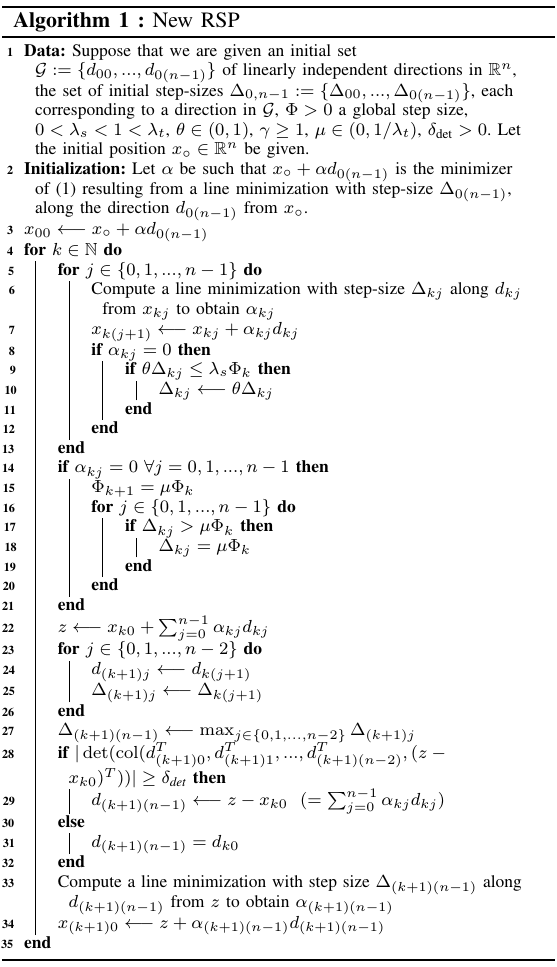}
	\end{subfigure}\\
	\begin{subfigure}{1\textwidth}
		\centering
		\includegraphics[scale=0.72]{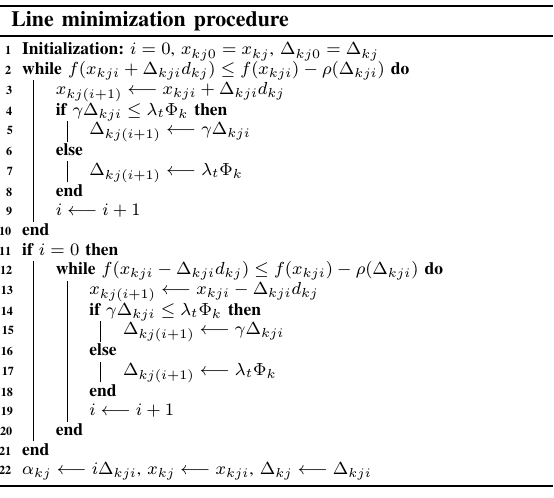}
	\end{subfigure}%

	\caption*{Alg. 1: New RSP algorithm with line minimization procedure.}
\end{figure}
\subsection{Proposed algorithm}

The algorithm proposed in this paper, shown in Alg. 1, is inspired by \cite{Garcia2002} and improves the results in \cite{Sanf} by guaranteeing, under the less restrictive assumptions (A0) and (A1), asymptotic convergence to the set of minima. 
The main differences with the RSP considered in \cite{Sanf} are reported in the following. In particular:
\begin{itemize}
	\item[1)] A different step size $\Delta_i$ is associated to each direction $d_i$ in order to guarantee more freedom of exploration. As such, when a new direction is computed (lines 28-32) also a new step size is associated to the new direction (line 27);
	\item[2)] A global step size $\Phi$ is considered, such that $\lambda_s\Phi\leq \Delta_j\leq \lambda_t\Phi$ for all $j=0,1,...n-1$. If no improvement is found along any direction, the global step size $\Phi$ is reduced to $\mu\Phi$, with $\mu\in(0,1/\lambda_t)$ (lines 14-21);
	\item[3)] In case no improvement is made along a direction (lines 8-12), the corresponding step size is reduced. This is the key step guaranteeing asymptotic convergence to the minima of the cost function;
	\item[4)] The newly computed direction is \enquote{accepted} only if it keeps the directions in $\G$ linearly independent (lines 28), otherwise the previous set of directions is retained.
\end{itemize}
\begin{remark}
	The step size associated to the newly computed direction is chosen as the maximum step size associated to the other directions, but any function bounded by the minimum and maximum of the step sizes would do. This is needed in order to guarantee that the step sizes are asymptotically reduced to zero.
\end{remark}
\begin{remark}
	The reason to reduce the step size when no improvement is found, stems from Theorem 3.3 in \cite{Kolda2003}, where it is reported that the norm of the gradient of the cost function, at points where no improvement was found along any direction, is bounded by a class $\mathcal{K}$ function of the step size. Thus, reducing the step size at those iterations, implies reducing the norm of the gradient, hence approaching a stationary point (or minimum in our case).
\end{remark}
\begin{remark}
	As pointed out in \cite{Byatt2004}, 4) is not necessary for convex quadratic functions, since every pair of conjugate directions are distanced by a minimum angle different from zero, but this is in general not true for functions satisfying assumptions (A0)-(A2).
\end{remark}

The line minimization procedure explores a direction $d_j$ from a starting point $x_{kj}$ and returns the distance $\alpha_j$ traveled from $x_{kj}$ to the found minimum of $f$ along $d_j$. The main differences in the line minimization procedure with respect to the RSP in \cite{Sanf} are the following:
\begin{itemize}
	\item[1)] Newly explored points are accepted only if a \textit{sufficient decrease} condition is satisified (lines 2 and 12), namely the function has decreased at least of $\rho(\Delta)$ along direction $d$;
	\item[2)] When a new iteration is accepted, the step size is, possibly, increased (lines 5 and 15) if the step size does not violate the upper bound imposed by the global step size.
\end{itemize}
\begin{remark}
	The sufficient decrease condition (lines 2 and 12) guarantees that the Armijo condition, needed for the algorithm convergence, is satisfied (see Section 3.7.1 in \cite{Kolda2003} for more details) and also guarantees a margin of robustness to measurement noise, as we will see in Section VI. In the sufficient decrease condition, we adopt the function $\rho:\R_{\geq 0}\to\R_{\geq 0}$ defined as
	\begin{equation}\label{rho}
	\rho(\Delta):=\begin{cases}
	\Delta^{\frac{1}{\Delta}} & \Delta\leq e\\
	\Delta +(e^{\frac{1}{e}}-e) & \Delta> e,
	\end{cases}
	\end{equation}
	and not as classically an $o(\Delta)$ function, in order to be able to escape local maxima. The function \eqref{rho} is a strictly increasing function of $\Delta$, that at $\Delta=0$ is smooth (from the right) but non-analytic, and such that $\rho(\Delta)=o(\Delta^n)$ for $\Delta\to 0$ for all $n\in\N$, implying that, under assumption (A1), if $\bar{x}\in\R^n$ is a local maxima for $f$, there exists $\bar{\Delta}>0$ such that for all $d\in\G$ and $\Delta\in(0,\bar{\Delta}]$, $f(\bar{x}+\Delta d)<f(\bar{x})-\rho(\Delta)$. Notice that any other function with the same properties as \eqref{rho} would also be an appropriate choice for $\rho$.
\end{remark}
\begin{remark}
	The step size increase during the line minimization procedure helps in better exploiting the directions in which the cost function decreases. This step does not hinder convergence of the algorithm thanks to assumption (A2).
\end{remark}
Define the set of global minima of $f$ as $\cA^\star:=\{x^\star\in\R^n:f(x^\star)\leq f(x)\ \forall x\in \R^n \}$ and as $i_{kj}^\star$ the number of steps computed in the line minimization procedure in Alg. 1 at iteration $k$ along direction $d_j$.
We can conclude the following convergence result for the algorithm in Alg. 1.
\begin{theorem}\label{Th1}
	\textit{Consider the class of cost functions fulfilling (A0)-(A2). Then, for any initial condition $x_\circ\in\R^n$ the sequence of iterate $x_{kji}$ generated by the RSP algorithm and the line minimization procedure in Alg. 1 is such that
		\begin{equation} 
		\lim_{k\to \infty} \|x_{kji}\|_{\cA^\star} = 0   \quad \forall j= 0,1,...,n-1\text{ and } \forall i\in[0,i_{kj}^\star].
		\end{equation}
	}
\end{theorem}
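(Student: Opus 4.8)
The plan is to reduce the set-convergence claim to a single statement about cost values and then close it with compactness. First I would note that the sufficient decrease test (lines 2 and 12) forces $f$ to strictly decrease at every accepted point, so the values $f(x_{kji})$ form a nonincreasing sequence; since $f$ is lower bounded by (A0), this sequence converges to some $f^\star$. By (A2) every sublevel set is compact, and since all accepted iterates remain in $\mathcal{L}_f(f(x_\circ))$, the whole family $\{x_{kji}\}$ lies in a fixed compact set $K$. This reduces the theorem to showing $f^\star=\underline f$, where $\underline f$ denotes the common value attained on $\cA^\star$ (well defined because all local minima coincide with global minima by (A1)): indeed, if $f^\star=\underline f$, then by continuity every subsequential limit of $\{x_{kji}\}$ in $K$ satisfies $f=\underline f$ and hence lies in $\cA^\star$, which is exactly $\|x_{kji}\|_{\cA^\star}\to 0$.

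Next I would establish $\Phi_k\to 0$ by a dichotomy. If only finitely many iterations are globally unsuccessful, then after some index $\Phi$ is frozen at a value $\Phi_\infty>0$ and every subsequent iteration yields a decrease of at least $\rho(\lambda_s\Phi_\infty)>0$ (using that $\rho$ is increasing and $\Delta_j\ge\lambda_s\Phi_\infty$ along the improving direction), contradicting the lower boundedness of $f$. Hence infinitely many iterations reduce $\Phi$ by the factor $\mu\in(0,1/\lambda_t)$, forcing $\Phi_k\to 0$ and, through $\lambda_s\Phi\le\Delta_j\le\lambda_t\Phi$, every step size to zero. Along this infinite subsequence of unsuccessful iterations I would invoke Theorem 3.3 of \cite{Kolda2003}: since $\G$ is kept linearly independent (line 28) and the line minimization probes both $\pm d_j$, the set $\{\pm d_j\}$ positively spans $\R^n$, and the theorem bounds $\|\nabla f(x_k)\|$ by a class-$\mathcal K$ function of the step size. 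As the step size vanishes, $\|\nabla f(x_k)\|\to 0$ on this subsequence.

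Passing to a convergent subsubsequence inside $K$, the limit $\bar x$ is a critical point of $f$, so by (A1) it is either a global minimum or an isolated local maximum. To exclude the latter I would use the tailored forcing function: by the remark following \eqref{rho}, analyticity of $f$ at a maximum together with $\rho(\Delta)=o(\Delta^n)$ for every $n$ yields a uniform $\bar\Delta>0$ with $f(\bar x+\Delta d)<f(\bar x)-\rho(\Delta)$ for all $d\in\G$ and $\Delta\in(0,\bar\Delta]$. Extending this to a neighborhood of $\bar x$ by continuity, any iterate eventually entering that neighborhood would satisfy the sufficient decrease test in some direction and thus be successful, contradicting $\bar x$ being a limit of unsuccessful iterates. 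Therefore $\bar x\in\cA^\star$, giving $f^\star=f(\bar x)=\underline f$, and the reduction of the first paragraph closes the argument.

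The main obstacle I anticipate is making Theorem 3.3 of \cite{Kolda2003} applicable: its gradient bound requires the cosine measure of $\{\pm d_j\}$ to stay bounded away from zero uniformly in $k$, whereas the mere linear-independence acceptance test (line 28) does not by itself preclude the recursively generated conjugate directions from becoming asymptotically degenerate for general $f$, precisely the phenomenon flagged in the remark recalling \cite{Byatt2004}. I would need to argue that the construction in Alg. 1 keeps the directions nondegenerate uniformly. A secondary, more routine point is upgrading the pointwise escape estimate at $\bar x$ to a uniform one on a neighborhood, so that it applies to the tail of iterates rather than to $\bar x$ alone.
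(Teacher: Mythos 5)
Your proposal follows essentially the same route as the paper's proof: the dichotomy argument giving $\Phi_k\to 0$ (and hence all step sizes vanishing) is the paper's first lemma verbatim; criticality of limit points of blocked points, exclusion of local maxima via analyticity together with $\rho(\Delta)=o(\Delta^n)$, and the final compactness/monotonicity reduction all appear, in the same order, in the appendix. Your two flagged obstacles are the right things to worry about, and both have cleaner resolutions than you anticipate.

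First, the degeneracy concern is not actually an obstacle, because the acceptance test in Alg.~1 is quantitative rather than qualitative: a new direction is accepted only if the determinant of the direction matrix stays above the fixed tunable threshold $\delta_{\text{det}}>0$ (this parameter appears in the hypotheses of Theorem 2 and in the map $\phi$ of the appendix). Since every new direction joins two explored points, its norm is bounded above by the diameter of the initial sublevel set, and the determinant bound then gives a lower bound on norms via Hadamard's inequality; consequently any limit of the direction sets is uniformly nondegenerate and spans $\R^n$. The paper exploits exactly this, and instead of invoking Theorem 3.3 of \cite{Kolda2003} it proves criticality directly: along blocked points, $\bigl(f(\bar x_k+\Delta_{kj}d_{kj})-f(\bar x_k)\bigr)/\Delta_{kj}>-\rho(\Delta_{kj})/\Delta_{kj}$, and passing to limits along convergent subsequences of the bounded directions yields $\nabla f(\bar x)^T\bar d_j\geq 0$ for both signs, hence $\nabla f(\bar x)=0$; this is self-contained and sidesteps the cosine-measure hypothesis altogether. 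Second, your ``secondary point'' (upgrading the escape estimate to a uniform one on a neighborhood of the maximum) can be avoided entirely: the paper splits the hypothetical sequence of blocked points converging to a local maximum $\bar x$ into two cases. If infinitely many of them differ from $\bar x$, isolation of the maximum gives $f(x_l)<f(\bar x)$, and monotonicity of accepted values then contradicts $f(x_l)\to f(\bar x)$; so the sequence must eventually sit exactly at $\bar x$, and only the pointwise Taylor/analyticity estimate at $\bar x$ itself is needed. With these two repairs your argument closes and coincides with the paper's.
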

The proof of Theorem \ref{Th1} is based on standard arguments for the proof of convergence to stationary points of $f$ in Direct Search algorithms. In particular, under assumptions (A0) and (A2), convergence of the sequence of global step size $\Phi_k\to 0$ is shown first, which, together with the sufficient decrease condition, guarantees convergence of $x_{kji}$ to a stationary point. Under assumption (A2), and due the particular choice of \eqref{rho}, convergence to the set of minima is shown. The detailed proof of Theorem \ref{Th1} is reported in the Appendix.
\section{Hybrid Controller}
In this section we design a hybrid controller $\mathcal{H}_c$ implementing the new RSP to solve a minimization problem in $\R^n$ under the assumptions (A0)-(A2), and steer \eqref{PMV} towards the set of minima of $f$. 

The reason for resorting to the hybrid systems framework is to provide results regarding the stability and robustness of the proposed algorithm when applied to continuous-time dynamical systems, also in the presence of measurement noise. In particular, the resulting hybrid regulator is based on the framework for hybrid systems in \cite{Teel}, and its dynamics are given by a flow map $F_c$ when the state ranges in the flow set $C$, and a jump map $G_c$ when the state ranges in the jump set $D$.

The algorithm Alg.1 is implemented as a discrete time system, whose dynamics are set-valued in order to satisfy the $hybrid\ basic\ conditions$ (Assumption 6.5 in \cite{Teel}) and have the closed-loop system $\mathcal{H}_{cl}$, given by the interconnection of $\mathcal{H}_c$ and \eqref{PMV}, nominally well-posed (see Definition 6.2 in \cite{Teel}), a property needed for the application of invariance principles in the proofs of the results in the next section. 


\subsection{State of $\mathcal{H}_c$}
The state of the controller is defined as $x_c=\col(\tau,\Delta_0,...,$ $\Delta_{n-1},$ $ d_0,...,$ $d_{n-1},$ $\Phi, \lambda, \alpha^T,\bar{\alpha},p,m, $ $k,q,z$, $ \Delta,v^T)$, and it ranges in $\mathcal{X}_c:=\R_{\geq0}\times\R_{\geq0}^n\times\R^{n\times n}\times \R_{\geq 0}\times\R\times\R^n\times\R_{\geq0}\times\{-1,1\}\times\{0,1\}\times\{0,...,n\}\times\{0,1,2\}\times\R\times\R\times\R^n$. 

The state variable $\tau$ is a timer, that resets every $\tau^\star>0$ seconds, and it regulates when new cost function evaluations are available.

Its hybrid dynamics are given by
\begin{equation}\label{Tim1}
\dot{\tau} =1 \quad \quad (\xi,x_c)\in C:=\{(\xi,x_c)\in\R^{n+l}\times\mathcal{X}_c:\tau\leq\tau^\star\},
\end{equation}
during flow, and
\begin{equation}\label{Tim2}
\tau^+=0 \quad \quad (\xi,x_c)\in D:=\{(\xi,x_c)\in\R^{n+l}\times\mathcal{X}_c:\tau\geq\tau^\star\},
\end{equation}
during jumps. 

The states $d_j\in\R^n$ and $\Delta_j\in\R_{\geq0}$, $j=0,1,...,n$ represent, in Alg. 1, the search directions and the step sizes corresponding to each direction.
The state variable $\lambda\in\R$, which keeps track of the distance traveled along the currently explored direction, and the state variable $\alpha\in\R^n$, which stores the total traveled vector from direction $d_0$, are related to the distance traveled along each direction, which is the variable $\alpha_j$ introduced in Alg. 1.

The state $\Phi\in\R_{\geq 0}$ represents the global step size and $\bar{\alpha}\in\R_{\geq 0}$ the total distance traveled during each cycle of directions exploration.

The positive or negative exploration along the current direction is determined by the state $p\in\{-1,1\}$, and the variable $m\in\{0,1\}$ indicates whether a turn has already been computed along the current direction.

To define in which operating point of the modified RSP algorithm the controller is, the state variables $k\in\{0,1,...,n\}$ and $q\in\{0,1,2\}$ have been introduced. The variable $k$ represents the state of the RSP, namely which direction is currently being explored. Notice that it has $n+1$ components since the direction $d_{n-1}$ is explored twice to be able to exploit the $Parallel\ Subspace\ Property$. The variable $q$, defining the state of the line minimization, assumes these values
\begin{itemize}
	\item[-] $q=0$: the positive line minimization;
	\item[-] $q=1$: the negative line minimization;
	\item[-] $q=2$: the line minimization is completed.  
\end{itemize}

The state variable $z\in\R$ is a memory state that keeps track of the best minimum value of $f$ found satisfying the sufficient decrease condition. 

Two more states have been added for ease of notation, namely $\Delta\in\R$ and $v\in\R^n$, that store the currently explored search direction and its corresponding step size.

\subsection{Hybrid Controller Structure}

The structure of $\mathcal{H}_c$ is given by
\begin{equation}\label{Hc}
\mathcal{H}_c\ : \ \begin{cases}\dot{x}_c = F_c:= \begin{bmatrix}
1 &
0 &
\hdots &
0
\end{bmatrix}^T & \hspace*{-3mm}(x,x_c)\in C\\
x_c^+\in G_c(x_c, f(x)):=\begin{bmatrix}
0\\
G_{c/\tau}(x_c,f(x))
\end{bmatrix} & \hspace*{-3mm}(x,x_c)\in D\\
u=K(x,x_c,\tau^\star),
\end{cases}
\end{equation}
with sets $C,\ D$ defined before. The flow map $F_c$ is a single-valued constant function with all components equal to zero except for the timer. 
The jump map $G_c:\mathcal{X}_c\times\R\to\mathcal{X}_c$ is a set-valued map, composed by the timer discrete dynamics and $G_{c/\tau}:\mathcal{X}_c/\R\times\R\to\mathcal{X}_c/\R$, representing the hybrid implementation of Alg. 1. The output of $\mathcal{H}_c$ is a function $K:\R^n\x\mathcal{X}_c\x\R_{> 0}\to\R^m$ that steers the $x$-subsystem from $x(t_j,j)$ to $x(t_j+\tau^\star,j) =x(t_j,j)+p(t_j,j)\Delta(t_j,j) v(t_j,j)$, with $t_j=\inf_{t\in\R_{\geq 0}}(t,j)\in\dom x$, for all $j\in\N$.

The set-valued map $G_{c/\tau}$ is reported in the Appendix. 

We stress that, as in the current implementation the step size is reduced and the timer limit is kept constant, the speed of system \eqref{PMV} is reduced proportionally by reduction of the step size. In this way the distance traveled during the flow gets smaller and smaller, and the state $x$ asymptotically converges to the set of minima. 
\section{Stability Analysis}
Define the hybrid closed-loop $\mathcal{H}_{cl}$ as the interconnection of the dynamics \eqref{PMV} and the controller $\mathcal{H}_c$ developed in the previous section, namely
\begin{equation}\label{HCL}
\mathcal{H}_{cl}\ : \ \begin{cases}\begin{aligned}
&\dot \xi = \varphi(\xi,K(x,x_c,\tau^\star))  \\
&\dot{x}_c = F_c  \\
&\xi^+=\xi \\
&x_c^+\in G_c(x_c, f(x)) \\
\end{aligned}
\begin{aligned}
&\left.\vphantom{\begin{aligned}
	&\dot x = K(x_c)  \\
	&\dot{x}_c = F_c 
	\end{aligned}}\right\rbrace\quad(\xi,x_c)\in C\\
&\left.\vphantom{\begin{aligned}
	&x^+=x\\
	&x_c^+\in G_c(x_c, f(x))
	\end{aligned}}\right\rbrace\quad(\xi,x_c)\in D
\end{aligned}
\end{cases}
\end{equation}
The flow and jump maps of the closed-loop system $\mathcal{H}_{cl}$ are thus defined as $F(\xi,x_c):=$ $\col(\varphi(\xi,$ $K(x,x_c,\tau^\star)),F)$ for all $(\xi,x_c)\in C$ and $G(\xi,x_c):=\col(\xi,G(x_c,f(x)))$ for all $(\xi,x_c)\in D$.

Define $\cA_{dis}:=\{-1,1\}\times\{0,1\}\times\{0,1,...,n\}\times \{0,1,2\}$. We consider the stabilization problem with respect to the sets $\mathcal{A}\subset \mathcal{A}_e\subset\R^{n+l}\x\mathcal{X}_c$, defined as
\begin{align}\label{A}
\begin{array}{ll}
\mathcal{A}:=\R^l\times\cA^\star&\times[0,1]\times\{0^n\}\times\R^{n\x n}\times\{0\}\times\{0\}\times\\&\{0^n\}\times\{0\}\x\cA_{dis}\times\{f(\cA^\star)\}\times\{0\}\times\R^n,
\end{array}
\end{align}
\begin{align}\label{A_e}
\begin{array}{ll}
\mathcal{A}_e:=&\R^{n+l}\times[0,1]\times\{\{0^n\}\times\R^{n\x n}\times\{0\}\cup\R^n\x\\&\{0^{n\x n}\}\x\R_{\geq 0}\}\times\{0\}\times\{0^n\}\times\{0\}\times\cA_{dis}\\&\x\R\times\{\{0\}\times\R^n\cup\R\x\{0^n\}\}.
\end{array}
\end{align}
The set $\cA$ represents the desired equilibrium set, namely the subset of $\R^{n+l}\x\mathcal{X}_c$ such that if $(\xi(0,0),x_c(0,0))\in\cA$, then $x(t,j)\in\cA^\star$ for all $(t,j)\in\dom (\xi,x_c)$. Notice that invariance of $\cA$ is guaranteed by all the step size variables being zero, so that $x(t_j+\tau^\star,j) =x(t_j,j)$. However, $x(t_j+\tau^\star,j) = x(t_j,j)$, namely no optimization step is computed, also for an initialization with $\Phi(0,0)=0$ and/or $d_j(0,0)=0$, even if $x(0,0)\notin\cA^\star$. The set $\cA_e$ takes into account this consideration, indeed it is the set of equilibrium points for $\mathcal{H}_{cl}$ for which no optimization step is performed due to an initialization with $\Phi=0$ and/or $d_j=0$ for all $j=0,1,...,n-1$.

\begin{theorem}\label{Thm2}\textit{
		Let assumptions (A0)-(A2) hold, $\tau^\star>0$, and the parameters of the algorithm Alg. 1 satisfy $\delta_{det}>0$, $0<\lambda_s<1<\lambda_t$, $\mu\in(0,1/\lambda_t)$, $\theta\in(0,1)$ and $\gamma\geq1$. Then, for the closed-loop system $\mathcal{H}_{cl}$, the set $\mathcal{A}$ in \eqref{A} is
		\begin{itemize}
			\item stable;
			\item  almost globally attractive;
		\end{itemize}
		hence it is almost globally asymptotically stable. Furthermore, the set $\cA_e$ in \eqref{A_e} is globally attractive for $\mathcal{H}_{cl}$.
	}	
\end{theorem}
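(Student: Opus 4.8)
The plan is to reduce the almost-global asymptotic stability of $\mathcal{A}$ to the discrete convergence already guaranteed by Theorem~\ref{Th1}, treating attractivity and stability by separate arguments, and then to obtain the global attractivity of $\mathcal{A}_e$ from a dichotomy between convergent and degenerate solutions. First I would verify that $\mathcal{H}_{cl}$ satisfies the hybrid basic conditions and is nominally well-posed---this holds by construction, since $G_c$ is outer semicontinuous and locally bounded---so that the invariance results of~\cite{Teel} apply. Using (A2) together with the sufficient-decrease condition, which renders the sampled values $f(x(t_j,j))$ non-increasing, I would confine $x$ to a compact sublevel set $\mathcal{L}_f(c)$ and conclude that every maximal solution is complete and bounded. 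Since the output $K$ steers $x(t_j,j)$ to $x(t_j,j)+p\,\Delta\,v$ over each interval of length $\tau^\star$, the sampled sequence $\{x(t_j,j)\}_j$ coincides exactly with the iterates $x_{kji}$ of Alg.~1 and the controller states $(\Delta_j,d_j,\Phi,\lambda,\alpha,\bar\alpha,z)$ evolve as the corresponding algorithm variables.

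For attractivity I would invoke Theorem~\ref{Th1} to obtain $x(t_j,j)\to\mathcal{A}^\star$ and, from the intermediate fact $\Phi_k\to0$ established in its proof, deduce via $\lambda_s\Phi\le\Delta_j\le\lambda_t\Phi$ that all step sizes vanish. Because the per-interval displacement is exactly $p\,\Delta\,v$ and the steering paths produced by $K$ have diameter controlled by $\|\Delta\,v\|$ (the search directions remaining bounded thanks to the acceptance rule that retains a previous direction whenever a new one would violate linear independence), the flow excursions of $x$ vanish, so the entire continuous-time trajectory---not only its jump-time samples---converges to $\mathcal{A}^\star$; concurrently $\lambda,\alpha,\bar\alpha\to0$ and, by continuity of $f$, $z\to f(\mathcal{A}^\star)$, yielding convergence of the full state to $\mathcal{A}$. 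The almost-global qualifier is then obtained by characterizing the exceptional initial conditions as those with $\Phi(0,0)=0$ or with linearly dependent initial directions (in particular some $d_j(0,0)=0$): this set is cut out by the equations $\Phi=0$ and $\det[d_0\ \cdots\ d_{n-1}]=0$ and hence has Lebesgue measure zero, while by the Remark following~\eqref{rho} every local maximum is repelling---any admissible $\Delta\in(0,\bar\Delta]$ satisfies the sufficient-decrease inequality---so (A1) introduces no further exceptional set.

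The main obstacle is stability, for which I would exploit the monotonicity of the global step size: since $\Phi$ is only ever reduced, one has $\Phi(t,j)\le\Phi(0,0)$ and therefore $\Delta_j\le\lambda_t\Phi(0,0)$ along the whole solution, so that both the flow excursions and the sampled increments of $x$ are uniformly bounded by a quantity proportional to $\Phi(0,0)$. Near $\mathcal{A}^\star$ and for small step sizes the sufficient-decrease inequality $f(x+\Delta d)<f(x)-\rho(\Delta)$ fails along every direction, so no step is accepted and the iterate cannot leave a small neighborhood; combining this with the non-increasing sampled values of $f$ and the level-set correspondence that, under (A1)--(A2), the sublevel sets $\{x:f(x)\le f(\mathcal{A}^\star)+\eta\}$ shrink to $\mathcal{A}^\star$ as $\eta\to0$, I would show that initializing in $\delta\mathbb{B}(\mathcal{A})$---so that $x(0,0)$ is near $\mathcal{A}^\star$ and $\Phi(0,0)$ is small---keeps the trajectory in $\epsilon\mathbb{B}(\mathcal{A})$. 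Together with attractivity this gives almost-global asymptotic stability.

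Finally, for global attractivity of $\mathcal{A}_e$ I would argue by the dichotomy noted above: every solution starting from the full-measure generic set converges to $\mathcal{A}\subset\mathcal{A}_e$, whereas every solution starting from a degenerate initialization performs no optimization step, so $x$ remains fixed and the controller state lies in the stuck-equilibria branch of~\eqref{A_e} (with the auxiliary states $\lambda,\alpha,\bar\alpha$ reset to zero). Since these two cases exhaust all initial conditions, $\mathcal{A}_e$ is attractive from every point, i.e. globally attractive.
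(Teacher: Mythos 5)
Your proposal diverges from the paper's proof in architecture (the paper argues through the Lyapunov function $V=z-f(\cA^\star)$, a two-stage stability argument, and a hybrid invariance principle), and two of your steps have genuine gaps. The first is the stability argument: it hinges on the claim that near $\cA^\star$, for small step sizes, the sufficient-decrease inequality fails along every direction so that no step is accepted. This is false by design: $\rho(\Delta)=\Delta^{1/\Delta}$ is $o(\Delta^n)$ for every $n$, so arbitrarily close to (but off) the set of minimizers there exist points and directions along which $f$ decreases by more than $\rho(\Delta)$; this is exactly the property the paper exploits to escape local maxima and keep making progress, so accepted steps occur in every neighborhood of $\cA^\star$. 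What actually confines the trajectory is that accepted samples lie in sublevel sets $\{x: f(x)\le z\}$ with $z$ nonincreasing --- but that monotonicity fails precisely during the initial transient, because $z(0,0)$ need not equal $f(x(0,0))$ (an initialization with $z(0,0)$ slightly below $f(\cA^\star)$ is compatible with starting in $\delta\mathbb{B}(\cA)$), and $z$ is re-synchronized \emph{upward} to $f(x)$ at the first passage through $g_3$. The paper's proof is organized around exactly this obstruction: monotonicity of $V$ is only claimed for $t\ge 3\tau^\star$, $j\ge 2$, and stability is obtained by splicing a finite-horizon bound over the transient (Corollary 4.8 in \cite{Hyb2}) with Theorem 7.6 in \cite{Inv} afterwards. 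Your proposal has no counterpart of this step. (A further uniformity issue: the per-interval displacement is $\Delta\|v\|$, and $v$, $d_j$ are unconstrained factors ($\R^n$, $\R^{n\times n}$) of $\cA$, so "proportional to $\Phi(0,0)$" is not uniform over $\delta\mathbb{B}(\cA)$.)

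The second gap is the dichotomy for global attractivity of $\cA_e$: you conflate \emph{membership} in $\cA_e$ with \emph{convergence} to it. A degenerate initialization with $\Phi(0,0)=0$ but $\Delta_j(0,0)>0$, or with $\lambda(0,0)$, $\alpha(0,0)$, $\bar\alpha(0,0)$ nonzero, does not lie in $\cA_e$ (which requires $\lambda=\alpha=\bar\alpha=0$ and, on the $\Phi=0$ branch, all step sizes zero), and $x$ does not remain fixed: steps of size $\Delta_j>0$ are taken until the update laws clip the step sizes to $\lambda_t\Phi=0$, and the auxiliary states are reset only when the automaton passes through $g_3$ or $g_5$. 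Similarly, nonzero but linearly dependent initial directions with $\Phi>0$ produce genuinely moving solutions that need not approach $\cA$. So one must prove that \emph{all} such solutions converge to $\cA_e$, which is what the paper's invariance step delivers: by Theorem 4.7 in \cite{Inv}, solutions approach the largest weakly invariant subset where $V$ is constant; $V$ is constant on $D_2$, $D_5$ and $\cA_e$, and the updates $m^+=1$ in $g_2$ and $q^+=0$ in $g_5$ rule out weak invariance of $D_2$ and $D_5$, leaving $\cA_e$. Your reduction of attractivity of $\cA$ to Theorem 1 is in the spirit of what the paper itself relies on for the almost-global part (cf.\ the remark following the theorem), but even there a short transient-matching argument is needed, since with arbitrarily initialized memory states $(z,\alpha,\lambda,v,p,m,k,q)$ the sampled sequence coincides with an Alg.~1 run only after the first complete cycle of line minimizations.
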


The proof of Theorem \ref{Thm2} and of the results in the next section, based on Lyapunov arguments and invariance principles, applied considering the Lyapunov candidate function $V(\xi,x_c):=z-f(\cA^\star)$, are included in the Appendix. 

\begin{remark}
	From Theorem \ref{Thm2} and the structure of $\cA$ and $\cA_e$, it follows in particular that, for any initialization such that det$(\col(d_0,d_1,...,d_{n-1}))\neq0$ and $\Phi\neq0$, boundedness of the closed-loop trajectories and asymptotic convergence to the set $\cA$ are guaranteed.
\end{remark}

\begin{remark}
	Notice that, depending on the values of the constants $\delta_{det}$, the quadratic termination property can be lost. Nonetheless, the asymptotic convergence property is preserved. 
\end{remark}

\section{About Robustness of the Algorithm}
In this section we investigate the robustness of the proposed algorithm to noise acting on the cost function measurements. We start with a negative result showing that general Direct Search Algorithms based on line minimizations and asymptotic step size reduction are not robust to any bounded measurement noise.
\begin{theorem}\label{TH3}
	\textit{Consider the class of Direct Search algorithms based on line minimizations and with asymptotic step size reduction, to which the algorithm Alg. 1 belongs to, acting on a function $f:\R^n\to\R$ satisfying assumptions (A0) and (A2). Then, for any bound $\bar{n}_s>0$, there exists a noise $n_s:\R\to\R$, with $|n_s(t)|\leq \bar{n}_s$ $\forall t\in\R$, such that, for noisy cost function measurements, namely $f(x(t))+n_s(t)$, and all initial conditions apart from a set of measure zero, the sequence of iterate produced by such algorithms escapes any compact sub-level set of $f$.}
\end{theorem}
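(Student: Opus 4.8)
The plan is to construct, for the given bound $\bar n_s$, a measurement noise that makes the sufficient-decrease test uninformative once the step size is small, and then to use that freedom to push the iterate out of every compact sublevel set. The only safeguard against noise in these algorithms is the threshold $\rho(\Delta)$ that an accepted point must beat. Since every algorithm in the class reduces the step size asymptotically and $\rho$ is continuous, strictly increasing and vanishes at $0$, there is a unique $\bar\Delta>0$ with $\rho(\bar\Delta)=2\bar n_s$, and $\rho(\Delta)<2\bar n_s$ for all $\Delta<\bar\Delta$. I would first record this elementary consequence of $\rho(\Delta)\to0$, since it isolates the regime in which the protection afforded by the sufficient-decrease condition is smaller than the noise budget.

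Next I would describe the adversarial pattern at the fixed sampling instants $t=k\tau^\star$. For a comparison between a reference point $x$ and a trial point $x'=x+p\Delta d$, I assign noise $+\bar n_s$ to the reference measurement and $-\bar n_s$ to the trial measurement; the measured values are then $f(x)+\bar n_s$ and $f(x')-\bar n_s$, so the test $f(x')_{\text{meas}}\le f(x)_{\text{meas}}-\rho(\Delta)$ is passed exactly when the \emph{true} change satisfies $f(x')-f(x)\le 2\bar n_s-\rho(\Delta)$. Swapping the two signs makes the same test fail whenever $f(x)-f(x')\le 2\bar n_s$. Hence, as soon as $\Delta<\bar\Delta$, the adversary can force acceptance of any trial that is uphill by at most $2\bar n_s-\rho(\Delta)$ and force rejection of any trial that is downhill by at most $2\bar n_s$. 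Because a local Lipschitz bound gives $|f(x')-f(x)|\le L\Delta<2\bar n_s-\rho(\Delta)$ for small $\Delta$, every elementary move the line search offers can be set to the adversary's choice, so in this regime the sequence of accept/reject decisions is dictated by the noise rather than by $f$.

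I would then turn this decoupling into escape. The step-size reduction meant to guarantee convergence is triggered only by an \emph{apparent} absence of improvement; by forcing an accepted step along at least one explored direction in every cycle, the adversary keeps the global step size $\Phi$ from contracting, so that each per-direction step size stays in the fixed band $[\lambda_s\Phi,\lambda_t\Phi]$, bounded below by $\lambda_s\Phi>0$ and, provided $\lambda_t\Phi<\bar\Delta$, below $\bar\Delta$ as well. With the step size thus pinned in this window, in each cycle and for each explored direction $d_j$ I select the sign $p\in\{-1,1\}$ for which the move $x\mapsto x+p\Delta_j d_j$ increases $\|x\|$ and force its acceptance: if that move is downhill the test passes genuinely, and if it is uphill (necessarily by at most $L\Delta_j<2\bar n_s-\rho(\Delta_j)$) it passes by the construction above. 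The iterate therefore acquires a strictly positive outward displacement, bounded below by a multiple of $\lambda_s\Phi$, in every cycle; summing over the infinitely many cycles forces $\|x_{kji}\|\to\infty$, and the coercivity implied by (A2) then gives $f(x_{kji})\to\infty$, i.e. the iterate leaves every $\mathcal L_f(c)$. The signal $n_s:\R\to\R$ is obtained by placing these prescribed values $\pm\bar n_s$ at the instants $k\tau^\star$ and interpolating with $|n_s|\le\bar n_s$ in between; if the initial global step size is not already in this window, I first set $n_s\equiv0$ until the algorithm's own reduction (Theorem~\ref{Th1}) brings $\lambda_t\Phi$ below $\bar\Delta$. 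The excluded initial conditions are those for which no outward move is available at the relevant step, namely $\nabla f(x_\circ)=0$ or a degenerate initial set $\G$, which form a set of measure zero.

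I expect the crux to be the step-size bookkeeping: one must check that the adversary can simultaneously keep $\Delta_j$ bounded below, so that the forced displacements do not form a summable sequence and the position genuinely diverges, and keep $\Delta_j<\bar\Delta$, so that uphill moves remain fakeable, despite the algorithm's own rules enlarging the step after an accepted move and shrinking it after a rejected one. For Alg.~1 the global band $[\lambda_s\Phi,\lambda_t\Phi]$ together with the fact that $\Phi$ contracts only on a full no-improvement cycle settles this cleanly; for the general class I would reproduce the same effect by interleaving a controlled number of forced rejections to hold the step size in a band straddling $\bar\Delta$. A secondary subtlety is the quantifier ``one $n_s$ for almost every initial condition'': since the sampling times are fixed but the role of each measurement (reference versus trial) depends on the trajectory, the honest construction builds the noise adaptively along the deterministic trajectory of each initial condition, and the measure-zero exceptional set is exactly the set of initial conditions on which this adaptive construction stalls.
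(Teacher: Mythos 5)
There is a genuine gap, and it sits exactly where you yourself flagged the crux: pinning the step size in a band $[\lambda_s\Phi,\lambda_t\Phi]$ bounded away from zero is incompatible with keeping uphill moves fakeable. The acceptance test can only be fooled when the \emph{true} variation satisfies $f(x')-f(x)\le 2\bar n_s-\rho(\Delta)$; your bound ``$|f(x')-f(x)|\le L\Delta<2\bar n_s-\rho(\Delta)$ for small $\Delta$'' treats $L$ as a constant, but under (A0)--(A2) the gradient is only locally bounded and is in general unbounded over $\R^n$ (take $f(x)=\|x\|^2$). Escaping every compact sublevel set forces $\|x_k\|\to\infty$ along a subsequence, so the iterate must cross regions of arbitrarily large gradient with your step size held at $\Delta\ge\lambda_s\Phi>0$. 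There, any move whose outward radial displacement is bounded below by a constant $c>0$ (which is what your per-cycle divergence argument needs) has true climb of order $Rc$ at radius $R$, which exceeds any fixed noise headroom $2\bar n_s$ for large $R$; conversely, the moves whose climb stays under the headroom are nearly tangential and make radial progress only of order $\Delta^2/R$, not a constant per cycle. So ``outward displacement bounded below by a multiple of $\lambda_s\Phi$ in every cycle'' and ``every such move remains fakeable'' cannot both hold outside a bounded region: your adversary stalls, and the iterate never leaves the corresponding sublevel set.

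A second defect compounds this: the pattern ``$+\bar n_s$ on the reference, $-\bar n_s$ on the trial'' presumes both points are freshly measured at each comparison, but in this class of algorithms the reference is the \emph{stored} value $z$ of the last accepted noisy measurement ($z^+=f(x)+n_s$ upon acceptance). Along a run of consecutive forced acceptances the noise must therefore decrease monotonically, and telescoping gives a total true climb of at most $2\bar n_s$ per run, with fresh references available only at the start of a new line search; so the climb per cycle is capped by a fixed multiple of $\bar n_s$ regardless of how you interleave signs. The paper resolves both issues by doing the opposite of your step-size bookkeeping: it \emph{exploits} the asymptotic step-size reduction, alternating a cumulative ``stuck'' noise \eqref{W1} (forcing rejections, shrinking $\Delta_k$, and allowing the noise to be re-armed at strategy switches) with a cumulative ``wrong-descent'' noise \eqref{W2}, and before steering the iterate into each larger compact set $\mathcal{C}_1$ it first shrinks the step size until condition \eqref{Condition} holds with the larger gradient bound $\nabla f_{\mathcal{C}_1}$; the summability established in Lemma \ref{Lem3} is what keeps the accumulated noise within $\bar n_s$ during those stuck phases. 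In short, the trade-off you tried to sidestep (small steps for fakeability versus large steps for motion) is intrinsic, and the correct construction must rescale the step size as the iterate is pushed across ever larger compact sets.
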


The above result shows that there is no robustness guarantee for the modified RSP algorithm, even if stability has been shown and convergence results are attainable for a proper choice of initial conditions.

Robustness to measurement noise for the hybrid closed-loop system $\mathcal{H}_{cl}$ is recovered by imposing a lower bound $\underline{\Phi}>0$ on the global step size $\Phi$, and modifying accordingly $G_{c\setminus\tau}$. 
In particular, in $g_5$, the discrete dynamics of $\Phi$ can be modified as follows.
\begin{equation}\label{NewPhi}
\Phi^+=\begin{cases}
\mu\Phi & \text{if }k=n\text{ and }\bar{\alpha}+\|\lambda v\|\leq \min\limits_{j\in\{0,1,...,n-1\}}\Delta_{j}/2\text{ and }\mu\Phi\geq \underline{\Phi}\\
\underline{\Phi} & \text{if }k=n\text{ and }\bar{\alpha}+\|\lambda v\|\leq \min\limits_{j\in\{0,1,...,n-1\}}\Delta_{j}/2\text{ and }\mu\Phi\leq \underline{\Phi}\\
\Phi & \text{otherwise}.
\end{cases}
\end{equation}
Moreover, given $\delta_{\text{det}}>0$, we restrict the domain of all the directions $d_j$ to be such that $\det(\col(d_0,d_1,...,d_{n-1}))\geq \delta_{\text{det}}$. Without loss of generality, we will denote the desired equilibrium set within the restricted domain for the directions as $\cA$. 

\begin{theorem}\label{TH4}
	\textit{Let assumptions (A0)-(A2) hold, $\underline{\Phi}>0$, the parameters of the algorithm Alg. 1 satisfy  $0<\lambda_s<1<\lambda_t$, $\delta_{\text{det}}>0$, $\mu\in(0,1/\lambda_t)$, $\theta\in(0,1)$ and $\gamma\geq1$, with the update of $\Phi$ modified such that $\Phi(t,j)\geq \underline{\Phi}$ for all $(t,j)\in \dom \Phi$. Then the set $\cA$ is semiglobally practically asymptotically stable on $\underline{\Phi}>0$ for $\mathcal{H}_{cl}$.}
\end{theorem}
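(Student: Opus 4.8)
The plan is to recover a robust, quantifiable stability guarantee by exploiting the positive floor $\underline{\Phi}$ on the global step size in two complementary roles: as an \emph{accuracy knob} that turns the asymptotic residual around $\cA$ into a vanishing function of $\underline{\Phi}$, and as a \emph{robustness margin} that makes the sufficient decrease test based on \eqref{rho} insensitive to small measurement noise. Since the property is asserted \enquote{on the parameter $\underline{\Phi}$}, I would organize the proof so that the overshoot bound $\epsilon_1$, the ultimate bound $\epsilon_2$, and the admissible size of the set of initial conditions are all delivered by a single choice of $\underline{\Phi}$ taken small enough, matching the definition of semiglobal practical asymptotic stability in the notation section.

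First I would fix $\underline{\Phi}>0$ and replay the nominal convergence analysis of Theorem~\ref{Th1} and Theorem~\ref{Thm2} with the modified dynamics \eqref{NewPhi}. The only structural change is that the monotone sequence $\Phi_k$ is now clamped from below, so instead of $\Phi_k\to 0$ it reaches $\Phi_k=\underline{\Phi}$ after finitely many reductions and remains there. Using $\lambda_s\Phi\le\Delta_j\le\lambda_t\Phi$, every per-direction step size is then trapped in $[\lambda_s\underline{\Phi},\lambda_t\underline{\Phi}]$. Applying the gradient estimate of Theorem~3.3 in \cite{Kolda2003} (as in the Remark following \eqref{rho}) at the cycle-completion instants, where no sufficient decrease is found along any $d_j$, yields $\|\nabla f(x)\|\le\kappa(\underline{\Phi})$ for a class-$\mathcal{K}$ function $\kappa$, with the constant controlled by $\delta_{\text{det}}$, $\lambda_s$, $\lambda_t$. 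Under (A1)--(A2) this confines $x$ to a neighborhood $\mathcal{N}(\underline{\Phi})$ of $\cA^\star$ that shrinks to $\cA^\star$ as $\underline{\Phi}\to 0$; compactness of the sublevel sets makes the confinement uniform over any compact set of initializations with $\det(\col(d_0,\dots,d_{n-1}))\ge\delta_{\text{det}}$ and $\Phi\ge\underline{\Phi}$.

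Next I would translate this into the hybrid statement using the Lyapunov candidate $V(\xi,x_c)=z-f(\cA^\star)$ already introduced for Theorem~\ref{Thm2}. Here $z$ is nonincreasing across jumps and lower bounded by $f(\cA^\star)$, and compactness of the relevant state components (from (A2) and the clamping of $\Phi$ and the $\Delta_j$) lets me invoke the invariance principle for nominally well-posed hybrid systems to conclude attractivity to the largest invariant set inside $\{V^+=V\}$, which by the previous step lies in $\epsilon_2\mathbb{B}(\cA)$ once $\underline{\Phi}$ is small enough that $\mathcal{N}(\underline{\Phi})\subset\epsilon_2\mathbb{B}(\cA^\star)$. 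The overshoot bound $\epsilon_1$ follows as in Theorem~\ref{Thm2} from continuity of $V$ and the fact that one optimization step moves $x$ by at most $\lambda_t\underline{\Phi}\|v\|$, and collecting the quantifiers gives exactly the claimed semiglobal practical asymptotic stability of $\cA$ on $\underline{\Phi}$. The robustness that motivates the section then comes almost for free: since $\Delta_j\ge\lambda_s\underline{\Phi}$, the decrease threshold satisfies $\rho(\Delta_j)\ge\rho(\lambda_s\underline{\Phi})=:\underline{\rho}>0$, so whenever a bounded error $|n_s(t)|\le\bar n_s$ with $2\bar n_s<\underline{\rho}$ corrupts the compared measurements, every accept/reject decision coincides with its noise-free counterpart; the iterate sequence is therefore unchanged and the estimate above persists up to an $O(\bar n_s)$ inflation of the residual. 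Equivalently, by the inherent robustness of asymptotic stability for well-posed hybrid systems in \cite{Teel}, the clamped system has a nonzero robustness margin lower bounded by $\rho(\lambda_s\underline{\Phi})/2$, which is precisely the advertised trade-off between accuracy (small $\underline{\Phi}$) and noise tolerance (large $\underline{\Phi}$).

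I expect the main obstacle to be the second step: pushing the finite-termination/step-size-floor argument through the set-valued hybrid dynamics and converting the floor into a uniform, $\underline{\Phi}$-parameterized bound on $\|x\|_{\cA^\star}$ rather than merely on $\|\nabla f(x)\|$. Because the gradient is also small near an isolated maximum, the decisive point is that the no-improvement (cycle-completion) instants cannot occur near a maximum: the non-analytic choice of $\rho$ in \eqref{rho} guarantees $f(\bar x+\Delta d)<f(\bar x)-\rho(\Delta)$ for small $\Delta$, so the iterate always escapes maxima and settles only near the minima. Making this rigorous requires the class-$\mathcal{K}$ estimate of \cite{Kolda2003} to hold uniformly over compact initial-condition sets and over admissible direction sets (hence the role of $\delta_{\text{det}}$), together with (A1)--(A2) to rule out stalling at a maximum or escape to infinity.
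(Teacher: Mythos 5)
Your nominal argument is correct and follows essentially the same route as the paper: the clamped global step size settles at $\underline{\Phi}$ after finitely many reductions (the paper gets this from Lemma~\ref{Lem}), stalling is impossible on compact sets away from the minima, and a $z$-based Lyapunov function together with the hybrid invariance principle (Theorem 4.7 in \cite{Inv}) gives attractivity of the $\epsilon_2$-neighborhood, while the overshoot bound and the choice of $\delta$ come from the trajectory bounds in the proof of Lemma~\ref{Lem2}, all quantifiers being closed by taking $\underline{\Phi}$ small enough. The only presentational difference is that you invoke the gradient bound of Theorem 3.3 in \cite{Kolda2003} at blocked points, whereas the paper argues the contrapositive (on any compact set not containing a local minimum there exists $\bar{\Phi}>0$ such that for all $\Phi<\bar{\Phi}$ some direction rescaled by $\lambda_s\Phi$ yields sufficient decrease) and uses the tailored function $V=z-f_{\epsilon}$ rather than $z-f(\cA^\star)$; both versions rest on the same two facts, namely the gradient lower bound away from critical points and the escape from maxima furnished by the non-analytic $\rho$.

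One caveat: the noise-robustness digression closing your third paragraph belongs to Corollary~\ref{Cor1}, not to Theorem~\ref{TH4}, and as stated it is incorrect. A bound $2\bar{n}_s<\rho(\lambda_s\underline{\Phi})$ does \emph{not} force every accept/reject decision to coincide with its noise-free counterpart: a step whose true decrease falls just short of $\rho(\Delta)$ can be accepted once noise is added, so the iterate sequence may differ from the nominal one. What is true, and what the paper's proof of Corollary~\ref{Cor1} actually uses, is the weaker fact that any \emph{accepted} step still yields a true decrease, $f(x^+)-f(x)\le 2\bar{n}_s-\rho(\Delta)\le 0$, so the Lyapunov decrease survives noise. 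Since Theorem~\ref{TH4} is a noise-free statement, this slip does not affect the proof under review, but it should be repaired before being reused for the corollary.
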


The lower bound on $\Phi$ also guarantees an explicit bound on the allowable maximum noise that can be accepted without losing robustness. 

\begin{corollary}\label{Cor1}
	\textit{For all parameters of the algorithm Alg. 1 satisfying $0<\lambda_s<1<\lambda_t$, $\delta_{\textnormal{det}}>0$, $\mu\in(0,1/\lambda_t)$, $\theta\in(0,1)$, $\gamma\geq1$, and all measurement noise $n_s:\R\times\mathbb{N}\to\R$ added to $f$, with $|n_s(t,j)|\leq \bar{n}_s$ for all $(t,j)\in\R\times\mathbb{N}$, with $\bar{n}_s>0$, pick $\underline{\Phi}^\star>0$ such that
		\begin{equation}\label{Bound}
		\bar{n}_s= \frac{\rho(\lambda_s\underline{\Phi}^\star)}{2}.
		\end{equation} 
		Then the set $\cA$ is semiglobally practically asymptotically stable on $\underline{\Phi}\geq\underline{\Phi}^\star$ for $\mathcal{H}_{cl}$, with the update of $\Phi$ modified such that $\Phi(t,j)\geq \underline{\Phi}$ for all $(t,j)\in \dom \Phi$. 
	}
\end{corollary}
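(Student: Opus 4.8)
The plan is to read the defining identity \eqref{Bound} as the exact condition under which the measurement noise cannot corrupt the accept/reject logic of the sufficient decrease test at any step size the algorithm is allowed to realize, and then to reduce the statement to Theorem \ref{TH4}. Concretely, I would first quantify the worst-case effect of the noise on a single comparison. Each acceptance in the line minimization procedure compares a stored measured value against a freshly measured candidate value; since the memory state $z$ carries a noisy measurement of the incumbent and the candidate carries its own measurement, the measured decrease differs from the true decrease $f(x)-f(x+\Delta d)$ by at most $2\bar{n}_s$. Hence a candidate is admitted by the noisy sufficient decrease test only if the true decrease is at least $\rho(\Delta)-2\bar{n}_s$.

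Second, I would establish that the robustness margin $2\bar{n}_s\le\rho(\Delta)$ holds uniformly along the trajectory. By construction the step sizes satisfy $\lambda_s\Phi\le\Delta_j\le\lambda_t\Phi$, and the modified update \eqref{NewPhi} enforces $\Phi(t,j)\ge\underline{\Phi}\ge\underline{\Phi}^\star$, so every realized step size obeys $\Delta_j\ge\lambda_s\underline{\Phi}^\star$. Since $\rho$ in \eqref{rho} is strictly increasing, this yields $\rho(\Delta_j)\ge\rho(\lambda_s\underline{\Phi}^\star)=2\bar{n}_s$, where the last equality is precisely \eqref{Bound}. Consequently, along any accepted jump the true cost genuinely does not increase, $f(x+\Delta d)\le f(x)-\rho(\Delta)+2\bar{n}_s\le f(x)$, while the stored value still drops by at least $\rho(\Delta)\ge\rho(\lambda_s\underline{\Phi})$.

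Third, I would feed these two facts into the Lyapunov--invariance argument underlying Theorem \ref{TH4}, using the candidate $V=z-f(\cA^\star)$. The previous step shows that the per-jump decrease of $z$, and hence of $V$, is the same as in the nominal analysis up to the fixed offset $\bar{n}_s$ separating $z$ from the true incumbent value of $f$; the step-size reduction logic therefore fires as in Theorem \ref{TH4}, but the global step size can never fall below $\underline{\Phi}\ge\underline{\Phi}^\star$. This is exactly the hypothesis of Theorem \ref{TH4}, so its conclusion --- semiglobal practical asymptotic stability of $\cA$ on the parameter $\underline{\Phi}\ge\underline{\Phi}^\star$ --- transfers to the noisy closed loop, with the residual offset $\bar{n}_s$ absorbed into the practical convergence radius $\epsilon_2$.

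I expect the main obstacle to be the asymmetric case of \emph{spurious rejection}: noise can mask a genuine improvement whose true decrease lies below $\rho(\Delta)+2\bar{n}_s$, so the test may decline a step that actually lowers $f$, prematurely triggering a step-size reduction. The delicate point is to argue that this never destabilizes the closed loop, because the modified update \eqref{NewPhi} prevents $\Phi$ from collapsing: with $\Delta\ge\lambda_s\underline{\Phi}^\star$ the algorithm retains an exploration radius large enough that any improvement exceeding the $O(\bar{n}_s)$ dead zone is still detected, pinning the asymptotic behavior to an $O(\underline{\Phi}^\star)$-neighborhood of $\cA$ rather than to exact convergence --- which is exactly the practical (as opposed to asymptotic) nature of the conclusion. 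Making this dead-zone bookkeeping precise, and checking that the $\bar{n}_s$ offset between $z$ and $f$ does not accumulate across the cycle of $n$ directions, is where the quantitative work of the proof concentrates.
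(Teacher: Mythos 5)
Your central computation is exactly the paper's: at an accepted step the measured decrease is at least $\rho(\Delta)$, the noise enters twice (once in the stored incumbent measurement $z$, once in the fresh candidate measurement), so the true decrease is at least $\rho(\Delta)-2\bar{n}_s$; and since the modified update forces $\Delta\geq\lambda_s\underline{\Phi}\geq\lambda_s\underline{\Phi}^\star$ and $\rho$ is increasing, \eqref{Bound} gives $\rho(\Delta)-2\bar{n}_s\geq 0$, i.e.\ the true cost cannot increase along accepted jumps. This is precisely the inequality chain $f(x(t(j_{k+1}^\star),j_{k+1}^\star))\leq f(x(t(j_k^\star),j_k^\star))+2\bar{n}_s-\rho(\Delta)$ at the heart of the paper's proof.

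Where your proposal breaks down is the Lyapunov bookkeeping that converts this inequality into stability. First, your candidate $V=z-f(\cA^\star)$ does not work in the noisy setting: $z$ is reset to a \emph{fresh noisy measurement} at every $D_3$ and $D_5$ jump ($z^+=f(x)+n_s$), so it can jump \emph{up} by as much as $2\bar{n}_s$ at each reset, and these resets recur indefinitely along the solution --- the discrepancy between $z$ and the true incumbent value is not a ``fixed offset'' that can be absorbed once into $\epsilon_2$; moreover, noise can push $z$ below $f(\cA^\star)$, so $V$ admits no class-$\mathcal{K}$ lower bound in terms of distance to $\cA$. The paper avoids both problems by taking $V=f(x)-f(\cA^\star)$, the \emph{true} cost. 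Second, even with the true cost, $V$ is not non-increasing at every jump: at $D_2$ jumps the vehicle sits at a rejected candidate where $f$ may exceed the incumbent value. The paper resolves this by first proving a new stability theorem for hybrid systems (an extension of the Aeyels--Peuteman averaging-type result of \cite{AP1}): stability follows if $V$ is sandwiched by class-$\mathcal{K}$ functions and is non-increasing along a subsequence of hybrid times with uniformly bounded gaps; the subsequence is then chosen to skip the $D_2$ jumps, and the bounded-gap property ($T=3\tau^\star+3$) is verified from the cycle structure $D_5\to D_1\to\dots\to D_2\to D_3\to D_4\to\dots\to D_5$. Your appeal to ``the Lyapunov--invariance argument underlying Theorem \ref{TH4}'' does not supply this device, since the conditions \eqref{dV}--\eqref{V+} used there pertain to the nominal system, where $z$ is noise-free and constant across $D_2$. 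Incidentally, the ``spurious rejection'' issue you flag as the main obstacle is in fact benign here: a masked improvement only drives the step size to its floor $\lambda_s\underline{\Phi}$ sooner, which affects the practical radius but not the stability argument; the genuine difficulty is the sampled (subsequence) monotonicity just described.
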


\begin{remark}
	In \cite{Sanf} an explicit characterization of the practical neighborhood of convergence to $\cA$, as function of the step size, is provided. As the dense exploration procedure adopted in \cite{Sanf} to guarantee such bounds cannot be extended to $n$-dimensional search spaces, a similar result cannot be achieved without further assumptions on $f$. Nonetheless, the norm of the gradient of $f$ can be bounded at steady state by a function of $\underline{\Phi}$ and the equilibrium set of exploring directions (see Theorem 3.3 in \cite{Kolda2003}).
\end{remark}
\begin{remark}
	The trade-off between practical global asymptotic stability and almost global asymptotic stability is, also, related to the lack of knowledge of $\cA^\star$ or $f(\cA^\star)$. By assuming, for example, knowledge of $f(\cA^\star)$, the discrete dynamics of $\Phi$ can be extended with the addition of a term $\rho_f(|f(x)-f(\cA^\star)|)$, where $\rho_f:\R_{\geq0}\to\R_{\geq 0}$ and $\rho_f(|f(x)-f(\cA^\star)|)>0$ for $x$ such that $|f(x)-f(\cA^\star)|>0$. This term would prevent the algorithm to remain stuck at the initial position when $\Phi$ is initialized at zero and, thus, Theorem \ref{Thm2} could be extended to guarantee global asymptotic stability of the set of minimizers.
\end{remark}
\section{Simulations Results}
In this section we show the results of different simulations of the proposed hybrid controller to the minimization of different objective functions.

Fig. 1 illustrates the level sets of the quadratic convex function
\begin{equation}\label{quad}
f(x)=x_1^2+5x_2^2,
\end{equation}
where $x=\col(x_1,x_2)$. The trajectory of a point-mass vehicle, steered by the proposed hybrid controller in order to minimize \eqref{quad}, is superimposed to the level sets of \eqref{quad}, showing the value of $f(x)$ at each corresponding point of the trajectory. The control input was chosen as $K(x,x_c,\tau^\star)=p\Delta v/\tau^\star $. For this simulation, the initial values of the state variables of the hybrid closed loop were chosen as $x(0,0)=\col(1.5,0)$, $ \tau(0,0)=0,\ \lambda(0,0)=0,\ \alpha(0,0)=0,\ z(0,0)=0,\ p(0,0)=1,\ q(0,0)=0,\ m(0,0)=0,\ k(0,0)=0,\ \alpha(0,0)=0,\ v(0,0)=\col(\cos(\pi/8), \sin(\pi/8)),\ \Delta(0,0)=0.01,\ d_0(0,0)=v,\ d_1=\col(-\sin(\pi/8),$ $ \cos(\pi/8)),\ \Phi(0,0) = 0.01, \ \Delta_j(0,0)=\Delta(0,0),\ j=0,1$. The tunable parameters of the controller were defined as $\gamma=1.2,\ \theta=0.5,\ \delta_{\textnormal{det}}=0.001, \ \mu = 0.15, \ \lambda_s= 0.001, \ \lambda_t= 5$.

\addtocounter{figure}{-1}
\begin{figure}[htp]
	\centering
	\begin{subfigure}{.5\linewidth}
		\includegraphics[scale=0.5]{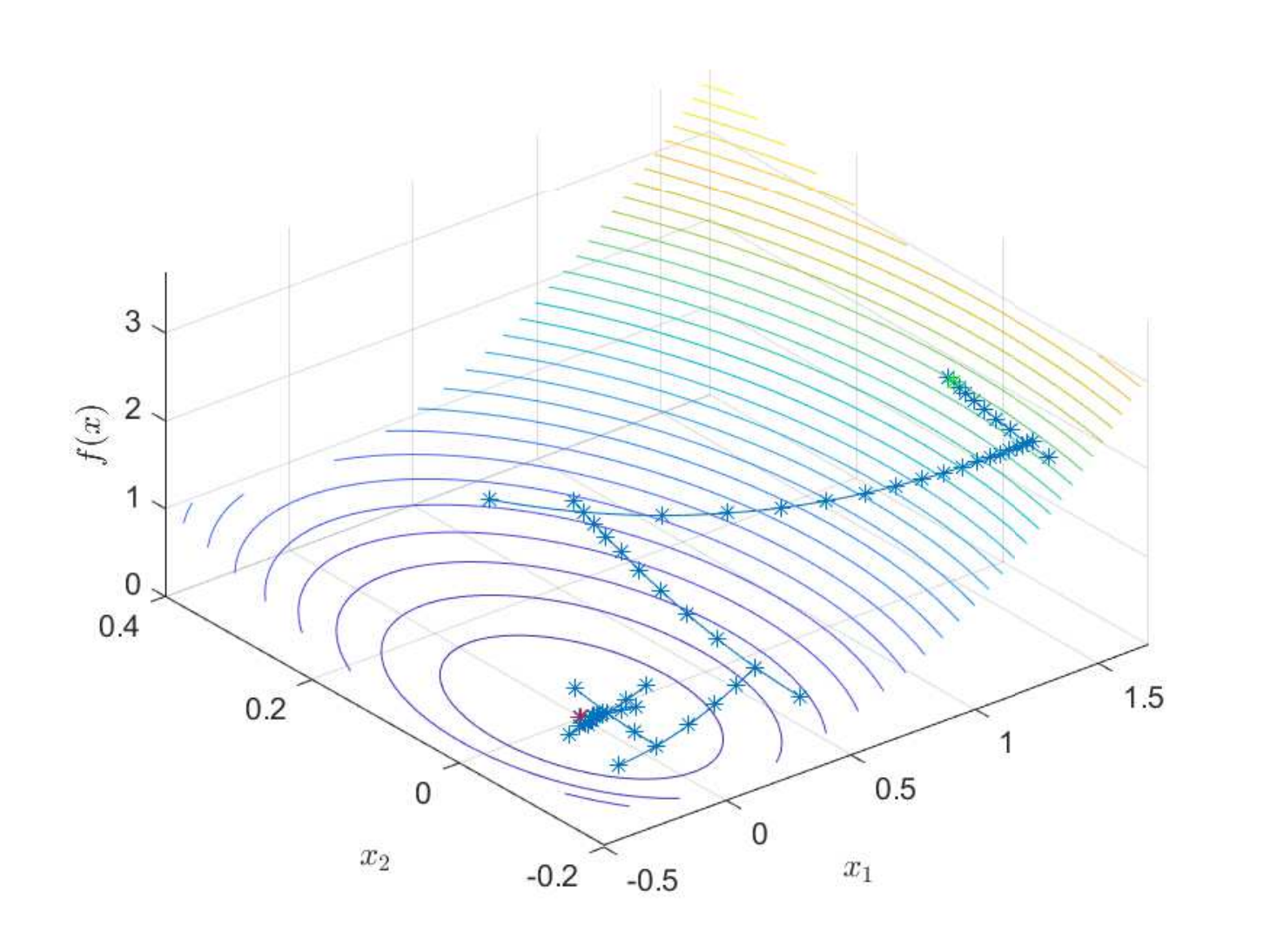}
		\centering
		\caption{ $x$ trajectory versus the level sets of a quadratic convex function}
	\end{subfigure}%
	\begin{subfigure}{.5\linewidth}
		\includegraphics[scale=0.5]{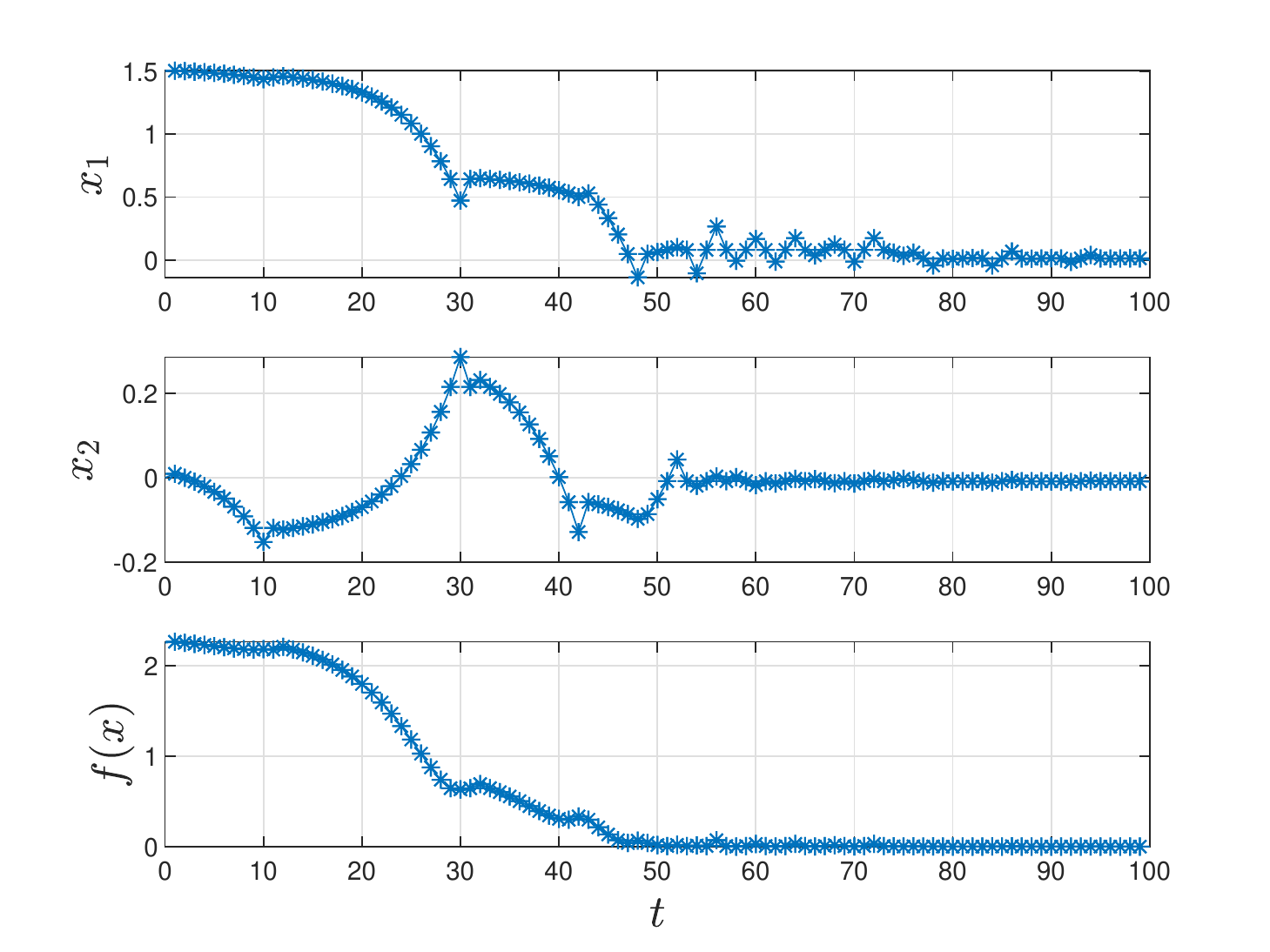}
		\centering
		\caption{$x(t)$ and $f(x(t))$}
	\end{subfigure}
	\caption{Plot of the trajectories of $x(t,j)$ and $f(x(t,j))$, where $f(x)=x_1^2+5x_2^2$. (a) Shows the vehicle path (blue with '*' where jump occurs) on the level sets of $f$. The initial point is indicated with a green '*' and the unique minimizer $(0,0,0)$ with a red '*'. (b) Shows the evolution of $x$ and $f(x)$ as function of time.}
\end{figure}
\begin{figure}[htp]
	\includegraphics[scale=0.4]{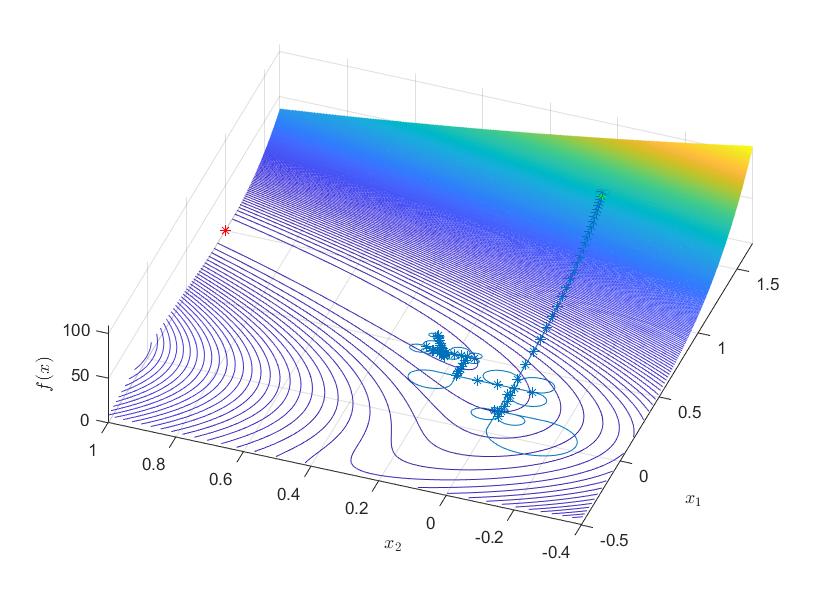}
	\centering
	\caption{The Dubin's vehicle path on the level sets of the Rosenbrock function \eqref{Ros}.}
\end{figure}

It can be noticed as in both Fig. 1(a) and Fig.1(b), the distance to the minimizer tends asymptotically to zero as the step size converges to zero.
The simulation reported in Fig. 2, instead, considered the nonconvex Rosenbrock function
\begin{equation}\label{Ros}
f(x)=(1-x_1)^2+10(x_2-x_1^2)^2,
\end{equation}
and the Dubin's vehicle dynamics
\begin{align*}
	\dot x_1 &= V\cos(\zeta)\\
	\dot x_2 &= V\sin(\zeta)\\
	\dot \zeta &= u,
\end{align*}
with $(x_1,x_2)\in\R^2$ the position, $V>0$ the velocity constant, $\zeta\in\R$ the orientation, and $u\in\R$ the control input. The initial conditions and parameter values were kept the same of the previous simulation. In this case the minimizer is given by $x^\star=(1,1)$ and, in spite of the nonconvex optimization problem, the trajectory of the state variable $x$ is converging towards it, remarkably.

\section{Conclusion}
This paper presents an extension of the results in \cite{Sanf}. In particular, an hybrid controller based on a modified RSP algorithm, which optimizes an objective function without gradient information, and that is able to achieve almost global asymptotic stability of the closed loop composed by the controller and a particular class of continuous-time dynamical systems is proposed. As direct search methods are shown to not be robust to measurement noise, a modified practical scheme is proposed, a bound relating the minimum allowable step size and the measurement noise supremum norm is reported, and stating how a trade-off between asymptotic convergence and robustness is inevitable for this class of algorithms. Simulations results are provided to validate the proposed approach. Future developments include the extension of the proposed controller to the multiagent scenario, in order to efficiently exploit the parallel subspace property, and to more general objective functions, e.g. to nonsmooth functions. 

\appendix
\setcounter{secnumdepth}{0}
\section{Appendix}

\subsection{The map $G_{c\setminus\tau}$}
The set-valued map $G_{c/\tau}$ is presented next. It is given by the composition of the maps $g_i(x_c,f(x_c))$ defined on the subsets $D_i$, $i=1,2,...,5$ of the jump set $D$, namely $G_{c/\tau}(x_c,f(x)) := g_i(x_c,f(x))$ for $(x_c,x)\in D_i$, where $D=\cup_{i=1}^{5}D_i$. We omit the update law of the state variables that remain constant at jumps.

The sets $D_i$ define the conditions under which the different operations of the algorithm proposed, integrated in the functions $g_i$, take place. 

\begin{itemize}[leftmargin=6mm]
	\item[1)] Continue a positive line search:\\
	$D_1=\{ (x,x_c) \in \R^n\times\mathcal{X}_c :\ f(x)\leq z -\rho(\Delta)$, $p=1$, $q\in\{0,1\}$, $m=0 \}$\\
	$g_1:\ z^+=f(x)$, $q^+=1$, \\	
	$\lambda^+=\begin{cases}
	\lambda+\Delta_{n-1}p & \quad \quad \quad \text{if }k=0\\
	\lambda+\Delta_{k-1}p & \quad \quad \quad \textnormal{otherwise}
	\end{cases}
	$\\
	$	\Delta_{k-1}^+=\begin{cases}
	\gamma\Delta_{k-1} & \text{if }k=1,...,n-1 \text{ and }\gamma\Delta_{k-1}\leq \lambda_t\Phi\\
	\lambda_t\Phi  & \text{if }k=1,...,n-1\text{ and }\gamma\Delta_{k-1}\geq \lambda_t\Phi
	\end{cases}
	$\\
	$	\Delta^+_{n-1}=\begin{cases}
	\gamma\Delta_{n-1} & \text{if }k=0,n \text{ and }\gamma\Delta_{n-1}\leq \lambda_t\Phi\\
	\lambda_t\Phi  & \text{if }k=0,n\text{ and }\gamma\Delta_{n-1}\geq \lambda_t\Phi
	\end{cases}
	$\\
	$\Delta^+=\begin{cases}
	\gamma\Delta_{k-1} & \text{if }k=1,2,...,n-1 
	\text{ and }\gamma\Delta\leq \lambda_t\Phi\\
	\gamma\Delta_{n-1} &  \text{if }k=0,n\text{ and }\gamma\Delta\leq \lambda_t\Phi\\
	\lambda_t\Phi & \text{if }\gamma\Delta\geq \lambda_t\Phi
	\end{cases}
	$\\
	\item[2)] Correct overshoot:\\
	$D_2=\{(x,x_c) \in \R^n\times\mathcal{X}_c : f(x)\geq z -\rho(\Delta)$, $q\in\{0,1\}$, $m=0 \}$\\
	$g_2:\ p^+=-p$, $m^+=1$, $q^+=q+1$,
	%
	%
	%
	\item[3)] Starting negative line search:\\
	$D_3=\{(x,x_c) \in \R^n\times\mathcal{X}_c :m=1$, $p=-1$, $q=1 \}$\\
	$g_3:\ z^+=f(x)$, $m^+=0$, $\lambda^+=0$,
	\item[4)] Continue a negative line search:\\
	$D_4=\{(x,x_c) \in \R^n\times\mathcal{X}_c :f(x)\leq z-\rho(\Delta)$, $p=-1$, $q=1$, $m=0\}$\\
	$g_4:\ z^+=f(x),$ \\
	$\lambda^+=\begin{cases}
	\lambda+\Delta_{n-1}p &\quad \quad \quad \text{if } k=0\\
	\lambda+\Delta_{k-1}p &\quad \quad \quad \text{otherwise}
	\end{cases}$\\
	$	\Delta_{k-1}^+\begin{cases}
	\gamma\Delta_{k-1} & \text{if }k=1,...,n-1 \text{ and }\gamma\Delta_{k-1}\leq \lambda_t\Phi\\
	\lambda_t\Phi  & \text{if }k=1,...,n-1\text{ and }\gamma\Delta_{k-1}\geq \lambda_t\Phi
	\end{cases}
	$\\
	$	\Delta^+_{n-1}=
	\begin{cases}
	\gamma\Delta_{n-1} & \text{if }k=0,n \text{ and }\gamma\Delta_{n-1}\leq \lambda_t\Phi\\
	\lambda_t\Phi  & \text{if }k=0,n\text{ and }\gamma\Delta_{n-1}\geq \lambda_t\Phi
	\end{cases}
	$\\
	$\Delta^+=\begin{cases}
	\gamma\Delta_{k-1} &    \text{if }k=1,2,...,n-1\text{ and }\gamma\Delta\leq \lambda_t\Phi\\
	\gamma\Delta_{n-1} &    \text{if }k=0,n\text{ and }\gamma\Delta\leq \lambda_t\Phi\\
	\lambda_t\Phi &    \text{if }\gamma\Delta\geq \lambda_t\Phi
	\end{cases}
	$\\
	\item[5)] Update direction and start positive line search:\\
	$D_5=\{(x,x_c) \in \R^n\times\mathcal{X}_c : q=2\}$\\
	$g_5:$\\
	$q^+=0,\ p^+=1,\ \lambda^+=0,\ m^+=0, z^+=f(x)$\\ 
	$\alpha^+=\begin{cases}
	\alpha+\lambda v &\quad \quad \quad \text{if } k=0,1,...,n-1 \\
	0 & \quad \quad \quad\text{if } k=n
	\end{cases}$\\
	$\bar{\alpha}^+=\begin{cases}
	\bar{\alpha}+\|\lambda v\| &\quad \quad \quad \text{if } k=0,1,...,n-1 \\
	0 & \quad \quad \quad\text{if } k=n
	\end{cases}
	$\\
	$k^+=(k+1)\mod n+1$\\
	$\Delta^+=\begin{cases}
	\Delta_{(k\mod n+1)} & \textnormal{if }k=0,1,...,n-1\\
	\mu\lambda_t\Phi & \text{if }k=n \text{ and }\bar{\alpha}+\|\lambda v\|\leq \min\limits_{j\in\{0,1,...,n-1\}}\frac{\Delta_{j}}{2}\\
	\lambda_t\Phi & \text{otherwise}
	\end{cases}$\\
	$\Phi^+=\begin{cases}
	\mu\Phi & \text{if }k=n\text{ and }\bar{\alpha}+\|\lambda v\|\leq \min\limits_{j\in\{0,1,...,n-1\}}\frac{\Delta_{j}}{2}\\
	\Phi & \text{otherwise}
	\end{cases}
	$\\
	$v^+=\begin{cases}
	\phi(\alpha,\lambda v, M_{1,n-1},d_0) & \text{if }k=n\\
	d_k & \text{otherwise}
	\end{cases}
	$\\
	$d_0^+=\begin{cases}
	d_0 & \text{if }k=0,...,n-1\\
	d_1 & \text{if }k=n
	\end{cases}
	$\\
	$
	\hspace*{0.3cm}
	\vdots$\\
	$d_{n-2}^+=\begin{cases}
	d_{n-2} & \text{if }k=0,...,n-1\\
	d_{n-1} & \text{if }k=n
	\end{cases}
	$\\
	$d_{n-1}^+=\begin{cases}
	d_{n-1} & \text{if }k=0,1,...,n-1\\
	\phi(\alpha,\lambda v, M_{1,n-1},d_0) & \text{if }k=n
	\end{cases}
	$\\
	$\Delta_{0}^+=\begin{cases}
	\theta\Delta_{0} & \text{if }k=1\text{ and }|\lambda|\leq\frac{\Delta_0}{2}\text{ and }\theta\Delta_{0}\geq\lambda_s\Phi\\
	\lambda_s\Phi & \text{if }k=1\text{ and }|\lambda|\leq\frac{\Delta_0}{2}\text{ and }\theta\Delta_{0}\leq\lambda_s\Phi\\
	\Delta_0 & \text{if }k=0,2,...,n-1 \text{ or }(k=1\text{ and }|\lambda|\geq\frac{\Delta_0}{2})\\
	\mu\lambda_t\Phi & \text{if }k=n\text{ and }\Delta_1\geq\mu\lambda_t\Phi\text{ and }\bar{\alpha}+\|\lambda v\|\leq \min\limits_{j\in\{0,1,...,n-1\}}\frac{\Delta_{j}}{2}\\
	\Delta_1 & \text{otherwise}\\
	\end{cases}
	$\\
	$\hspace*{0.3cm}\vdots$\\
	$\Delta_{n-2}^+= \begin{cases}
	\theta\Delta_{n-2} & \text{if }k=n-1\text{ and }|\lambda|\leq\frac{\Delta_{n-2}}{2}\text{ and }\theta\Delta_{n-2}\geq\lambda_s\Phi\\
	\lambda_s\Phi & \text{if }((k=n-1\text{ and }|\lambda|\leq\frac{\Delta_{n-2}}{2})\text{or }(k=n\text{ and }|\lambda|\leq\frac{\Delta_{n-1}}{2}))\\&\text{and }\theta\Delta_{n-2}\leq\lambda_s\Phi\\
	\Delta_{n-2} & \text{if }k=0,...,n-2  \text{ or }(k=n-1\text{ and }|\lambda|\geq\frac{\Delta_{n-2}}{2})\\
	\theta\Delta_{n-1} & \text{if }k=n\text{ and }|\lambda|\leq\frac{\Delta_{n-1}}{2}\text{ and }\Delta_{n-1}\geq\mu\lambda_t\Phi\\
	\mu\lambda_t\Phi & \text{if }k=n\text{ and }\Delta_{n-1}\geq\mu\lambda_t\Phi\text{ and }\bar{\alpha}+\|\lambda\|\leq \min\limits_{j\in\{0,...,n-1\}}\frac{\Delta_{j}}{2}\\
	\Delta_{n-1} & \text{otherwise}
	\end{cases}
	$\\
	$\Delta_{n-1}^+= \begin{cases}
	\theta\Delta_{n-1} & \text{if }k=0\text{ and }|\lambda|\leq\frac{\Delta_{n-1}}{2}\text{ and }\theta\Delta_{n-1}\geq\lambda_s\Phi\\
	\lambda_s\Phi & \text{if }k=0\text{ and }|\lambda|\leq\frac{\Delta_{n-1}}{2}\text{ and }\theta\Delta_{n-2}\leq\lambda_s\Phi\\
	\Delta_{n-1} & \text{if }k=1,...,n-1 \text{ or }
	(k=0\text{ and }|\lambda|\geq\frac{\Delta_{n-1}}{2})\\
	\mu\lambda_t\Phi & \text{if }k=n\text{ and }\bar{\alpha}+\|\lambda\|\leq \min\limits_{j\in\{0,1,...,n-1\}}\frac{\Delta_{j}}{2}\\&\text{and }\max\limits_{j\in\{0,...,n-2\}}\Delta_{j}\geq\mu\lambda_t\Phi\\
	\max\limits_{j\in\{0,...,n-2\}}\Delta_{j} &\text{otherwise}
	\end{cases}
	$\\
\end{itemize}

Even if the logic of the jump map $G_{c/\tau}$ resembles the one in the new RSP in Alg. 1, a couple of explanations are in order.

The computation of the new conjugate direction in $g_5$ is addressed by the function $\phi:\R^n\times\R^n\times\R^{n\times (n-1)}\times\R^n$ defined as
\begin{align*}
&\phi(\alpha,\beta, M_{1,n-1},d_0)=\\
&\begin{cases}
\alpha+\beta &  \det(\col(M_{1,n-1}^T,(\alpha+\beta)^T)^T)>\delta_{\text{det}}\\
d_0 & \det(\col(M_{1,n-1}^T,(\alpha+\beta)^T)^T)<\delta_{\text{det}}\\
\left\{d_0, \alpha+\beta \right\} & \text{otherwise},
\end{cases}
\end{align*}
where $M_{1, n-1}:=\col(d_1^T,...,d_{n-1}^T)^T$.
The conditions in $\phi$ check if the new direction $\alpha+\beta$, that is going to be computed exploiting the $Parallel\ Subspace\ Property$, is linearly independent from the last $n-1$ directions, namely if the determinant of the concatenation of $M_{1, n-1}$ and the new direction is bigger than a tunable parameter $\delta_{det}>0$. In case this condition is not satisfied, the previous set of directions is retained.

%

The update rule of the states $\Delta_j$, $j=0,1,...,n-1$ also needs clarification. Let us consider $\Delta_{n-1}^+$ since the same reasoning applies to the other state variables. The condition $	|\lambda|<\Delta_{n-1}/2$ is a different way to express the condition $\lambda=0$, while at the same time satisfying outer semicontinuity of the map $g_5$. Indeed $|\lambda|<\Delta_{n-1}/2$ is satisfied only for $\lambda=0$, except perhaps at the initialization, since along direction $d_{n-1}$, $\Delta_{n-1}$ is the minimum displacement possible for $\lambda$. Moreover it is checked if $\Delta$ still satisfies the bounds imposed by the global step size $\Phi$, if this is not the case, it is updated to the corresponding upper (or lower) bound. 

\subsection{Proof of Theorem 1}

Denote as \textit{blocked points} all the points $x_{kj}$ such that
\begin{equation*}
f(x_{kj}\pm\Delta_{kj} d_{kj})>f(x_{kj}) - \rho(\Delta_{kj}), \quad \quad \forall j=0,1,...,n-1
\end{equation*}
namely points where no improvement is found along any of the exploring directions $d_j$.
\begin{lemma}\label{Lem}
	The sequence of step sizes $\{\Delta_{kji}\}$ produced by the line minimization procedure of Alg. 2 is such that $\Delta_{kji}\to 0$ as $k\to\infty$ for all $j\in\{0,1,...,n-1\}$ and $i\in[0,i_{kj}^\star]$.
\end{lemma}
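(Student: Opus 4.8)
The plan is to reduce the entire statement to showing that the global step size $\Phi_k\to 0$, and then to exploit the structural two-sided bound $\lambda_s\Phi_k\le \Delta_{kj}\le \lambda_t\Phi_k$. First I would record that this bound is invariant under the jump map: the upper bound $\lambda_t\Phi$ is enforced in $g_1$ and $g_4$, where every increase of a step size by the factor $\gamma$ is capped at $\lambda_t\Phi$, and for the in-line iterates this already yields $\Delta_{kji}\le \lambda_t\Phi_k$; the lower bound $\lambda_s\Phi$ is enforced by the reductions in $g_5$, where every reduction by $\theta$ is floored at $\lambda_s\Phi$. I would also observe that $\Phi$ is non-increasing, since it is left unchanged by $g_1,g_2,g_3,g_4$ and, in $g_5$, is either kept constant or multiplied by $\mu\in(0,1/\lambda_t)\subset(0,1)$. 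Hence $\{\Phi_k\}$ is non-increasing and $\Delta_{kji}\le \lambda_t\Phi_k$ for all $j,i$, so it suffices to prove $\Phi_k\to 0$.

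The argument for $\Phi_k\to 0$ proceeds by a dichotomy on how often the reduction in $g_5$ is triggered. The global step size is reduced, by the fixed factor $\mu<1$, exactly when a blocked point is reached, i.e. when at the end of a cycle ($k=n$) the total displacement satisfies $\bar{\alpha}+\|\lambda v\|\le \min_{j}\Delta_{j}/2$. If this reduction is triggered infinitely often, then, $\Phi$ being non-increasing and each reduction multiplying it by $\mu<1$, we immediately obtain $\Phi_k\to 0$ and the claim follows.

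It remains to rule out the case in which the reduction is triggered only finitely often. Suppose, for contradiction, that $\Phi_k=\bar{\Phi}>0$ for all $k\ge K$. Then for every cycle after $K$ the blocking condition fails, so that cycle has strictly positive total displacement, $\bar{\alpha}+\|\lambda v\|> \min_{j}\Delta_{j}/2>0$. A positive total displacement can only be produced by accepting at least one line-minimization step, since $\bar{\alpha}$ accumulates the norms $\|\lambda v\|$ of the per-direction displacements and $\lambda$ is advanced only through the accepted steps of $g_1$ and $g_4$. Every accepted step lies in $D_1$ or $D_4$, hence satisfies $f(x)\le z-\rho(\Delta)$ and updates the memory state by $z^+=f(x)$; because $\Delta\ge \lambda_s\bar{\Phi}$ and $\rho$ is strictly increasing, each such step lowers $z$ by at least $\rho(\lambda_s\bar{\Phi})>0$, while all other jumps leave $z$ non-increasing. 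Each line minimization terminates after finitely many steps, since within a line search $\Phi$ is constant, so every accepted step lowers $f$ by the fixed positive amount $\rho(\lambda_s\bar{\Phi})$ and $f$ is lower bounded by (A0); thus infinitely many cycles are completed, each contributing at least one such step. Consequently $z\to-\infty$, whereas $z$ always equals $f$ evaluated at an iterate and $f$ is lower bounded by (A0) (the iterates moreover remaining in a compact sublevel set by (A2)), a contradiction. Therefore the finitely-often case is impossible, $\Phi_k\to 0$, and $\Delta_{kji}\le \lambda_t\Phi_k\to 0$.

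I expect the main obstacle to be the bookkeeping that links the purely geometric blocking condition $\bar{\alpha}+\|\lambda v\|\le \min_{j}\Delta_{j}/2$ to the dynamical statement that no sufficient-decrease step was accepted during the cycle. This rests on the fact, noted in the description of $g_5$, that $\lambda$ advances only in increments of the current step size, so that $|\lambda|<\Delta/2$ encodes $\lambda=0$, together with the fact that $\bar{\alpha}$ accumulates displacement norms without cancellation; tracking this precisely across the positive search ($g_1$), the overshoot correction ($g_2$ and $g_3$) and the negative search ($g_4$) is the one genuinely algorithm-specific step, whereas the geometric reduction of $\Phi$ and the lower-bound contradiction are standard for direct-search methods.
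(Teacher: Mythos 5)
Your proposal is correct and follows essentially the same route as the paper's proof: the two-sided bound $\lambda_s\Phi_k\le\Delta_{kji}\le\lambda_t\Phi_k$ together with monotonicity of $\Phi$, then the dichotomy on whether blocked points occur infinitely often (giving $\Phi_k\to 0$ directly) or finitely often (giving, via the sufficient-decrease condition with $\rho(\Delta)\ge\rho(\lambda_s\bar{\Phi})>0$, an unbounded decrease of $f$ contradicting its lower bound under (A0)/(A2)). The only differences are cosmetic: you phrase the bookkeeping through the hybrid jump maps $g_i$ rather than the algorithm's pseudocode lines, and you additionally note that each line minimization terminates in finitely many steps, a point the paper leaves implicit.
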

\begin{proof}
	
	By construction, $\lambda_s \Phi_k\leq\Delta_{kji}\leq\lambda_t\Phi_k$ and $\{\Phi_k\}$ is a non-increasing sequence that reduces at blocked points (see lines 14-15 of Alg. 1). Hence, if blocked points occurred infinitely often, then we would have that, at blocked points, $\Phi_k\to 0$, and thus $\Delta_{kji}\to0$.
	
	By contradiction, if blocked points were not to occur infinitely often, then it means that there exists $w\in\mathbb{N}$ such that $\Phi_k=\Phi_w$ for all $k\geq w$. Thus, given $\lambda_s \Phi_w=\epsilon>0$, it follows that $\Delta_{kji}\geq \epsilon$ for all $k\geq w$, $j\in\{0,1,...,n-1\}$ and $i\in[0,i_{kj}^\star]$. Hence, for all $k\geq w$, there exists $j\in\{0,1,...,n-1\}$ and $i_{kj}^\star>0$ such that 
	\begin{align*}
	f(x_{kji}+\Delta_{kji}d_{kj})\leq f(x_{kji})-\rho(\Delta_{kji})\leq f(x_{kji})-\rho(\epsilon) \forall i\leq i_{kj}^\star.
	\end{align*}
	As such, $f(x_{kji})$ would decrease without a bound, contradicting (A2).
\end{proof}
\begin{lemma}\label{Lem3}
	For every $n\in\N_{0}$, the function $\rho:\R_{\geq 0}\to\R_{\geq 0}$ defined in \eqref{rho} is $o(\Delta^n)$ for $\Delta\to 0$ and, given $\theta\in(0,1)$ and $\Delta\in(0,1)$, the series $\sum_{n=0}^{\infty} (\theta^n\Delta)^{\left(\frac{1}{\theta^n\Delta}\right)}$ is convergent.
\end{lemma}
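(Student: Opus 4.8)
The plan is to treat the two assertions in turn, exploiting that near the origin $\rho$ coincides with the branch $\Delta^{1/\Delta}$, and then to obtain the convergence of the series as a direct consequence of the first assertion.

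For the first claim, fix $n\in\N_0$. Since $\Delta\to 0$, we may assume $\Delta\leq e$, so that $\rho(\Delta)=\Delta^{1/\Delta}$, and I would analyze the ratio $\rho(\Delta)/\Delta^n=\Delta^{1/\Delta-n}$ through its logarithm, i.e.\ the exponent $(1/\Delta-n)\log\Delta$. For $\Delta$ small the factor $1/\Delta-n$ is positive and tends to $+\infty$, while $\log\Delta\to-\infty$; hence the product tends to $-\infty$ and $\Delta^{1/\Delta-n}=e^{(1/\Delta-n)\log\Delta}\to 0$. This yields $\rho(\Delta)=o(\Delta^n)$ for every $n\in\N_0$ (the case $n=0$ being simply $\rho(\Delta)\to 0$).

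For the second claim, I would invoke the first one with $n=2$: by $\rho(\Delta)=o(\Delta^2)$ there is $\bar\Delta\in(0,1)$ with $\Delta^{1/\Delta}\leq\Delta^2$ for all $\Delta\in(0,\bar\Delta)$. Writing $a_k:=\theta^k\Delta$ for the summation index $k$, from $\theta\in(0,1)$ and $\Delta\in(0,1)$ we have $a_k\in(0,1)$ and $a_k\to 0$, so there exists $N$ with $a_k<\bar\Delta$ for all $k\geq N$. Then, for $k\geq N$, the general term obeys $a_k^{1/a_k}\leq a_k^2=\Delta^2(\theta^2)^k$, and since $\theta^2\in(0,1)$ the tail $\sum_{k\geq N}\Delta^2(\theta^2)^k$ is a convergent geometric series; adding the finitely many initial terms shows that $\sum_{k=0}^\infty(\theta^k\Delta)^{1/(\theta^k\Delta)}$ converges by comparison.

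The arguments are elementary, so no step constitutes a genuine obstacle; the only point demanding care is the limit in the first part, where one must confirm that $(1/\Delta-n)\log\Delta$ diverges to $-\infty$ for each fixed $n$ (so that both the sign and the magnitude of the exponent are settled as $\Delta\to 0^+$), after which the comparison underlying the second part is immediate.
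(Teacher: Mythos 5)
Your proof is correct, but it follows a genuinely different (and in places cleaner) route than the paper's. For the first assertion, the paper proves only the case $n=1$ explicitly, bounding $\Delta^{(1-\Delta)/\Delta}$ by $\bar{\Delta}^{n(\Delta)}$ with an auxiliary integer-valued function $n(\Delta)\to\infty$, and then appeals to ``similar reasoning'' for general $n$; your direct computation with the logarithm, $\rho(\Delta)/\Delta^n = e^{(1/\Delta-n)\log\Delta}\to 0$ since $(1/\Delta-n)\to+\infty$ and $\log\Delta\to-\infty$, handles every $n\in\N_0$ uniformly in one stroke and avoids the auxiliary construction entirely. For the series, the difference is more substantial: the paper factors each term as $(\theta^k\Delta)^{1/(\theta^k\Delta)} = \bigl((\theta^{1/\Delta})^k\bigr)^{1/\theta^k}\bigl(\Delta^{1/\Delta}\bigr)^{1/\theta^k}$ and then writes the sum as a \emph{product} of two series, each dominated by a subseries of a geometric series. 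As written (an equality between the sum of products and the product of sums) that step is not literally valid, though it can be salvaged as an upper bound because all terms are nonnegative. Your argument sidesteps this delicacy completely: you feed the first part of the lemma back into the second, obtaining $a_k^{1/a_k}\leq a_k^2 = \Delta^2(\theta^2)^k$ for all $k$ beyond some $N$, and conclude by comparison with a geometric series of ratio $\theta^2\in(0,1)$ plus finitely many initial terms. (In fact $n=1$ would already suffice, giving the bound $a_k\leq\Delta\,\theta^k$.) What the paper's factorization buys is an explicit separation of the roles of $\theta$ and $\Delta$ in the tail estimate, which it reuses in the robustness analysis of Theorem 3; what your route buys is a shorter, self-contained comparison argument whose every step is airtight.
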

\begin{proof}
	Consider, without loss of generality, $\rho(\Delta)=\Delta^{\frac{1}{\Delta}}$.
	
	Let us show that $\Delta^{\frac{1}{\Delta}}$ is $o(\Delta)$. From the definition of little-o notation, we want to prove that
	\begin{equation}\label{Delta0}
	\lim_{\Delta\to 0} \frac{\Delta^{\frac{1}{\Delta}}}{\Delta} = \lim_{\Delta\to 0}\Delta^{\frac{1-\Delta}{\Delta}} = 0.
	\end{equation}
	To do so, notice that, for $\Delta\to0$, there exists $\bar{\Delta}\in(0,1)$ such that $\Delta^{\frac{1-\Delta}{\Delta}}<\bar{\Delta}^{\frac{1-\bar{\Delta}}{\Delta}}$ for $\Delta\in(0,\bar{\Delta})$. Moreover, defining $n(\Delta)=\min_{\bar{n}\in\N}\frac{1-\bar{\Delta}}{\Delta}<\bar{n}$, $\bar{\Delta}^{\frac{1-\bar{\Delta}}{\Delta}}<\bar{\Delta}^{n(\Delta)}$. Since $\bar{\Delta}\in(0,1)$ and, by definition, $\lim_{\Delta\to0}n(\Delta)=+\infty$, $\lim_{\Delta\to0}\bar{\Delta}^{n(\Delta)}=0$. Hence, as $\lim_{\Delta\to 0} \Delta^{\frac{1-\Delta}{\Delta}}\leq\lim_{\Delta\to0}\bar{\Delta}^{n(\Delta)}=0$ and $\Delta^{\frac{1-\Delta}{\Delta}}\geq 0$ for all $\Delta\in\R_{\geq 0}$, the limit \eqref{Delta0} is proved.
	
	From a similar reasoning it follows that for every $n>0$, $\Delta^{\frac{1}{\Delta}}$ is $o(\Delta^n)$. 
	
	Consider now the series $\sum_{n=0}^{\infty} (\theta^n\Delta)^{\left(\frac{1}{\theta^n\Delta}\right)}$, we can rewrite it as
	\begin{align}\label{Series}
	\sum_{n=0}^{\infty} (\theta^n\Delta)^{\left(\frac{1}{\theta^n\Delta}\right)} = \left( \sum_{n=0}^{\infty} \left(\left(\theta^\frac{1}{\Delta}\right)^n\right)^{\left(\frac{1}{\theta^n}\right)}\right) \left( \sum_{n=0}^{\infty} \left((\Delta)^{\left(\frac{1}{\Delta}\right)}\right)^{\left(\frac{1}{\theta^n}\right)}\right).
	\end{align}
	By assumption $\theta^{\frac{1}{\Delta}}\in(0,1)$.
	
	Define as $\bar{\theta}\in\R_{>0}$ the smallest real number such that $\theta\leq\bar{\theta}$ and $1/\bar{\theta}\in\N$. Then, as the series $\sum_{n=0}^{\infty} \left(\left(\theta^\frac{1}{\Delta}\right)^n\right)^{\left(\frac{1}{\theta^n}\right)}$ is bounded by $\sum_{n=0}^{\infty} \left(\left(\bar{\theta}^\frac{1}{\Delta}\right)^n\right)^{\left(\frac{1}{\bar{\theta}^n}\right)}$, that is bounded by a subseries of $\sum_{n=0}^{\infty} \left(\bar{\theta}^\frac{1}{\Delta}\right)^n$, it will converge.
	
	For the same reasoning, since $\sum_{n=0}^{\infty} \left((\Delta)^{\left(\frac{1}{\Delta}\right)}\right)^{\left(\frac{1}{\theta^n}\right)}$ can be bounded by a subseries of $\sum_{n=0}^{\infty} \left(\Delta^\frac{1}{\Delta}\right)^n$, and all the terms in \eqref{Series} are positive, the whole series converges. 
\end{proof}
\begin{theorem}\label{TH5}
	Every limit point $x$ of the sequence of blocked points generated by Alg. 2 satisfies $\nabla f(x)=0$.
\end{theorem}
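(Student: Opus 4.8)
The plan is to convert the defining inequality of a blocked point into a bound on the directional derivatives of $f$ and then pass to the limit, using Lemmas \ref{Lem} and \ref{Lem3} to drive the right-hand side to zero.

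First I would fix a blocked point $x_{kj}$ and an index $j$. Since $f$ is continuously differentiable by (A0), the mean value theorem furnishes points $\xi_{kj}^{+}$ and $\xi_{kj}^{-}$ on the segments joining $x_{kj}$ to $x_{kj}\pm\Delta_{kj}d_{kj}$ such that $f(x_{kj}\pm\Delta_{kj}d_{kj})=f(x_{kj})\pm\Delta_{kj}\,\nabla f(\xi_{kj}^{\pm})^{T}d_{kj}$. Substituting into the blocked-point condition $f(x_{kj}\pm\Delta_{kj}d_{kj})>f(x_{kj})-\rho(\Delta_{kj})$ and dividing by $\Delta_{kj}>0$ yields the pair of bounds
\begin{equation*}
\nabla f(\xi_{kj}^{+})^{T}d_{kj}>-\frac{\rho(\Delta_{kj})}{\Delta_{kj}},\qquad \nabla f(\xi_{kj}^{-})^{T}d_{kj}<\frac{\rho(\Delta_{kj})}{\Delta_{kj}}.
\end{equation*}

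Next I would let $k\to\infty$. Lemma \ref{Lem} gives $\Delta_{kj}\to0$, and Lemma \ref{Lem3} gives $\rho(\Delta)=o(\Delta)$, whence $\rho(\Delta_{kj})/\Delta_{kj}\to0$; moreover $\|\xi_{kj}^{\pm}-x_{kj}\|\le\Delta_{kj}\|d_{kj}\|\to0$, so the evaluation points collapse onto the blocked points. Let $x$ be a limit point of the blocked points and fix a subsequence along which $x_{kj}\to x$. The directions $d_{kj}$ remain in a compact set, since the iterates stay in a compact sublevel set of $f$ by (A2) and the updated directions are built from bounded displacement vectors; hence by Bolzano--Weierstrass I may refine the subsequence so that $d_{kj}\to d_{j}^{\star}$ for every $j$, and the determinant floor $\det(\col(d_{0},\dots,d_{n-1}))\ge\delta_{\text{det}}>0$ is preserved in the limit, so $\{d_{0}^{\star},\dots,d_{n-1}^{\star}\}$ is a basis of $\R^{n}$. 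Passing to the limit in the two inequalities, using continuity of $\nabla f$ and $\xi_{kj}^{\pm}\to x$, gives $\nabla f(x)^{T}d_{j}^{\star}\ge0$ and $\nabla f(x)^{T}d_{j}^{\star}\le0$, hence $\nabla f(x)^{T}d_{j}^{\star}=0$ for every $j$. Since the $d_{j}^{\star}$ span $\R^{n}$, this forces $\nabla f(x)=0$.

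The step I expect to be the main obstacle is this limiting argument on the directions. Unlike a fixed positive spanning set, the $d_{kj}$ are regenerated by the conjugate-direction update and need not converge a priori, so the work is to extract a single subsequence along which the iterates and all $n$ directions converge simultaneously while the spanning property survives the limit. This is exactly where the determinant lower bound $\delta_{\text{det}}$ is essential: it keeps the directions uniformly linearly independent, preventing the limiting set from collapsing into a proper subspace, which would only yield vanishing of $\nabla f(x)$ along that subspace rather than $\nabla f(x)=0$. The overall statement is the finite-dimensional counterpart of Theorem 3.3 in \cite{Kolda2003}, where the gradient norm at blocked points is controlled by a class-$\mathcal{K}$ function of the step size; here $\rho(\Delta)/\Delta$ plays precisely the role of that bound.
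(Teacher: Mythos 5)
Your proposal is correct and follows essentially the same route as the paper's proof: pass to the limit in the blocked-point inequality using $\Delta_{kj}\to 0$ and $\rho(\Delta)=o(\Delta)$, extract convergent subsequences of the (bounded) directions, and use the determinant floor $\delta_{\text{det}}$ to ensure the limiting directions span $\R^n$ so that vanishing of all directional derivatives forces $\nabla f(x)=0$. The only difference is cosmetic: you invoke the mean value theorem to move the gradient evaluation to intermediate points $\xi_{kj}^{\pm}$, whereas the paper passes directly to the limit of the difference quotients; your version is, if anything, a slightly more careful formalization of the same limiting step.
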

\begin{proof}
	Denote as $\{\bar{x}_{k}\}$ the sequence of blocked points. Then
	\begin{equation*}
	f(\bar{x}_k+\Delta_{kj}d_{kj}) - f(\bar{x}_k) > -\rho(\Delta_{kj}) \quad \quad \forall j\in\{0,1,...,n-1\}.
	\end{equation*}

	Notice that, by $\det(\col(d_0^T,d_1^T,...,d_{n-1}^T))>\delta_{\text{det}}$, compactness of the sublevel sets of $f$ and the fact that the length of new directions, computed in line 29 of Alg. 1, is the distance between two explored points (and thus bounded by the diameter of the initial compact sublevel set), the norm of $d_{kj}$, for all $j=0,1,...,n-1$ and $k\geq0$, is upper bounded by $d_{max}:=\max_{j=0,1,...,n}\{d_{0j}, \textnormal{diam}(\mathcal{L}_f(x_o))\}$, as well as lower bounded. 
	The sequence $\{d_{kj}\}$ is thus bounded, and as such, considering any limit point $\bar{d}_j$, we can conclude that
	\begin{align*}
	\nabla f(\bar{x})^T \bar{d}_j = \lim_{\bar{x}_k\to x, \Delta_{kj}\to 0} \frac{f(\bar{x}_k+\Delta_{kj}d_{kj}) - f(\bar{x}_k)}{\Delta_{kj}}\geq\hspace*{1.3cm}\geq \lim_{\Delta_{kj}\to 0} -\frac{
		\rho(\Delta_{kj})}{\Delta_{kj}} = 0.
	\end{align*}
	Since this result is valid also for $-\bar{d}_j$, it follows that $\nabla f(x)^T\bar{d}_j=0$. Moreover, since $\{\bar{d}_0,\bar{d}_1,$ $...,\bar{d}_{n-1}\}$ span $\R^n$, we can conclude that $\nabla f(x)=0$.
\end{proof}
\begin{theorem}\label{TH6}
	Every limit point $x$ of the sequence of blocked points generated by Alg. 2 is a minimum.
\end{theorem}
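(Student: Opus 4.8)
The plan is to combine Theorem~\ref{TH5} with assumption (A1) to reduce the statement to the exclusion of local maxima, and then to exploit the super-polynomial flatness of $\rho$ established in Lemma~\ref{Lem3} to show that in a neighbourhood of a local maximum the sufficient-decrease test can always be passed, so that such a point cannot be a limit of \emph{blocked} points.

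By Theorem~\ref{TH5}, any limit point $x$ of the sequence of blocked points satisfies $\nabla f(x)=0$, so $x$ is a critical point. Assumption (A1) states that the critical set contains no saddle points and that every local minimum is a global minimum; hence $x$ is either a global minimizer, in which case there is nothing to prove, or an isolated local maximum, and it suffices to rule out the latter. Suppose, for contradiction, that $x$ is a local maximum, and let $\{\bar x_{k_l}\}$ be a subsequence of blocked points with $\bar x_{k_l}\to x$. By Lemma~\ref{Lem} the associated step sizes satisfy $\Delta_{k_l j}\to 0$, and, as in the proof of Theorem~\ref{TH5}, the directions are bounded away from zero and infinity, $0<d_{min}\le\norm{d_{k_l j}}\le d_{max}$. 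Since $x$ is an isolated critical point and $f$ is analytic at $x$, the maximum is necessarily \emph{strict} (a curve of points at the maximal value would all be critical points accumulating at $x$), so $\nabla^2 f(x)\preceq 0$ and $f(y)<f(x)$ for $0<\norm{y-x}$ small.

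The core estimate is a Taylor expansion around the blocked point $y:=\bar x_{k_l}$, with $\Delta:=\Delta_{k_l j}$ and $d:=d_{k_l j}$:
\begin{equation*}
f(y)-f(y+s\Delta d)=-s\,\Delta\,\nabla f(y)^{\top}d-\tfrac{1}{2}\Delta^{2}\,d^{\top}\nabla^{2}f(y)\,d+o(\Delta^{2}),\qquad s\in\{-1,1\}.
\end{equation*}
I would choose the sign $s$ so that $-s\,\nabla f(y)^{\top}d=|\nabla f(y)^{\top}d|\ge 0$, which removes any adverse contribution of the linear term. When $\nabla^2 f(x)\prec 0$ this finishes quickly: by continuity $-d^{\top}\nabla^2 f(y)\,d\ge c\,\norm{d}^{2}\ge c\,d_{min}^{2}>0$ for $y$ close to $x$, so the right-hand side is bounded below by $\tfrac{c}{4}d_{min}^{2}\Delta^{2}$ for small $\Delta$, and since Lemma~\ref{Lem3} gives $\rho(\Delta)=o(\Delta^{2})$ this exceeds $\rho(\Delta)$ once $l$ is large. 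Thus $f(\bar x_{k_l}+s\Delta_{k_l j}d_{k_l j})<f(\bar x_{k_l})-\rho(\Delta_{k_l j})$ for some tested signed direction, contradicting the definition of a blocked point, which requires failure of sufficient decrease for both signs and all $j$. Hence $x$ cannot be a local maximum and must be a global minimizer.

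The main obstacle is the degenerate case $\det\nabla^2 f(x)=0$, where the quadratic term no longer yields a lower bound of order $\Delta^{2}$. Here one must replace the second-order expansion by the leading nonzero homogeneous form of $f$ at $x$, which analyticity together with isolatedness of the critical point (assumption (A1)) forces to be an even-degree, negative-definite form; the resulting decrease is then of order $\Delta^{N}$ with a coefficient bounded away from zero uniformly as $y\to x$, and the key property of Lemma~\ref{Lem3}, namely $\rho(\Delta)=o(\Delta^{N})$ for \emph{every} $N$, is exactly what guarantees that $\rho(\Delta)$ is eventually dominated. Making this estimate uniform over the nonzero offset $y-x$ (rather than only at $x$, as in the Remark following~\eqref{rho}) is the technically delicate point, and it is precisely where the analyticity hypothesis on $f$ at its local maxima is needed.
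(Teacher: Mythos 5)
There is a genuine gap, and it sits exactly where you flagged it: the degenerate case $\det\nabla^2 f(x)=0$. Two problems. First, your structural claim is false: analyticity plus isolatedness of the critical point does \emph{not} force the leading nonzero homogeneous form to be negative definite. Take $f(x_1,x_2)=-x_1^2-x_2^4$: the origin is an isolated critical point and a strict global maximum, $f$ is analytic, yet the leading form is $-x_1^2$, which is only negative semidefinite (it vanishes on the $x_2$-axis), so no single homogeneous form of any degree gives you a direction-independent decrease of order $\Delta^N$ with definite coefficient. Second, even granting some leading-order structure, the uniformity of the decrease estimate over blocked points $y=\bar x_{k_l}\neq x$ near $x$ is the entire difficulty, not a technicality: at such points $\nabla f(y)\neq 0$ but $\|\nabla f(y)\|\to 0$ as $y\to x$, so the linear term gives no uniform margin, and controlling the competition between the vanishing linear term, the (possibly semidefinite) higher-order terms, and $\rho(\Delta)$ jointly in $(y,\Delta)$ is a hard problem (of \L{}ojasiewicz type). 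You explicitly leave it unresolved, so the proof is incomplete precisely in the case that motivates the analyticity hypothesis.

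The paper avoids this entirely with a dichotomy you should adopt. Suppose blocked points $x_l\to\bar x$ with $\bar x$ a local maximum, and let $\epsilon_m$ be the radius of strictness ($f(x)<f(\bar x)$ for $0<\|x-\bar x\|\leq\epsilon_m$). Case one: infinitely many $x_l$ lie in the punctured ball, i.e.\ $x_l\neq\bar x$. Then $f(x_{\bar l})<f(\bar x)$ for the first such index, and since the sufficient-decrease rule makes $f$ nonincreasing along the \emph{entire} iterate sequence, $f(x_l)\leq f(x_{\bar l})<f(\bar x)$ for all later $l$; continuity of $f$ then forbids $x_l\to\bar x$. No Taylor expansion at nearby points is ever needed — monotonicity replaces it. Case two: the sequence eventually equals $\bar x$ exactly. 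Only here does one invoke analyticity, and only \emph{at} the point $\bar x$ itself: for each tested direction $d$, $t\mapsto f(\bar x+td)$ is analytic and nonconstant, so its first nonvanishing derivative (of some even order $m_d$, negative) yields $f(\bar x)-f(\bar x+\Delta d)\geq c\,\Delta^{m_d}$ for small $\Delta$, which dominates $\rho(\Delta)=o(\Delta^{m_d})$ by Lemma \ref{Lem3}; since $\Delta_{kj}\to 0$, the point cannot remain blocked forever, a contradiction. Your nondegenerate-Hessian argument is correct but is subsumed by this; restructuring along the dichotomy closes the gap without any uniform estimate near $\bar x$.
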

\begin{proof}
	By assumption (A1) and Theorem \ref{TH5}, we only need to show that every limit point of the sequence of blocked points is not a maximum. As, by (A1), we are assuming that every maximum is an isolated point, it follows by definition that, considering a local maximum $\bar{x}\in\R^n$, there exists $\epsilon_m>0$ such that $\forall x\neq\bar{x}\in\R^n$ such that $\|x-\bar{x}\|\leq \epsilon_m$, $f(x)<f(\bar{x})$.
	
	We will prove the result by contradiction. Suppose there exists a subsequence of blocked points converging to $\bar{x}$. Denote it as $\{x_l\}$. Since for each $j=0,1,...,n-1$, $\Delta_{kji}>0$ and $\Delta_{kji}\to 0$ for $k\to\infty$, there exists $\bar{l}>0$ such that $\forall l\geq \bar{l}$, $\|x_l-\bar{x}\|< \epsilon_m$. 
	
	If every term of the sequence $\{x_l\}$ is such that $x_{l}\neq \bar{x}$, then, by the sufficient decrease condition and the definition of local maximum, it follows that $x_l\not\to \bar{x}$, since $f(\bar{x})>f(x_{l})$ for all $l\geq \bar{l}$, contradicting that such a sequence exists.
	
	So the only way for such a sequence to exist is if for some $\bar{\bar{l}}\geq\bar{l}$, $x_{l}=\bar{x}$ for all $l\geq\bar{\bar{l}}$.
	
	As $f$ is analytic at $\bar{x}$, there exists an even $m>0$ such that the $m-th$ derivative of $f$ with respect $x$ is different from zero and, being $\bar{x}$ a maximum, its norm is lower than zero. Denote it as $f^m(x)$. Then, considering the Taylor expansion of $f(\bar{x}+\Delta d)$ around $\bar{x}$, and noticing that $\Delta^{\frac{1}{\Delta}}$ is $o(\Delta^m)$ and $\|d\|$ is lower bounded, there exists a $\bar{\Delta}\in(0,1)$ such that for all $\Delta\in(0,\bar{\Delta})$
	\begin{equation*}
	f^m(x)\|d\|^m\Delta^m<-\rho(\Delta),
	\end{equation*}
	and thus there exists $\underline{l}\geq \bar{\bar{l}}$ such that $x_{\underline{l}}\neq \bar{x}$ and $f(x_{\underline{l}})<f(\bar{x})$.
	
	
	Thus every limit point of the sequence of blocked points cannot be a maximum, hence they will all be minima.
\end{proof}
We prove now Theorem 1, namely that
\begin{equation*}
\lim_{k\to\infty} \|x_{kji}\|_{\mathcal{A}^\star}=0 \quad \text{for all }j\in\{0,1,...,n-1\}\text{ and }i\in[0,i_{kj}^\star].
\end{equation*}
Denote, without loss of generality, the sequence $\{x_{kji}\}$ as $\{x_k\}$.
Notice that the subsequence of blocked points $\{x_{b_k}\}$ of $\{x_{k}\}$ converges to $\mathcal{A}^\star$, by Theorem \ref{TH6}. Suppose, by contradiction, that a subsequence $\{x_{p_k}\}$ of $\{x_{k}\}$ does converge to a point $\bar{x}\notin\mathcal{A}^\star$, with $\|\bar{x}\|_{\cA^\star}>\epsilon_p$, for some $\epsilon_p>0$.

By definition of converging sequence, there exists a $p^\star>0$ such that, for all $p\geq p^\star$, $\|x_{p_k}-\bar{x}\|< \epsilon_p$. Denote as $f_{\epsilon_p}:=\inf_{\{x:\|x-\bar{x}\|< \epsilon_p\}}f(x)$. Pick $\epsilon_b>0$ such that $\sup_{\{x:\|x\|_{\mathcal{A}^\star}< \epsilon_b\}}f(x)<f_{\epsilon_p}$ and notice that
$\|\bar{x}\|_{\mathcal{A}^\star}>\epsilon_p+\epsilon_b$.

Then there exists a $b^\star>0$ such that for all $b\geq b^\star$, $\|x_{b_k}\|_{\mathcal{A}^\star}< \epsilon_b$.

Pick $\chi^\star=\max(p^\star,b^\star)$ and define as $\bar{b}^\star\geq \chi^\star$ the smallest $k$ such that $x_{\bar{b}^\star}$ is a blocked point and $\bar{p}^\star \geq \chi^\star$ the smallest $k$ such that $x_{\bar{p}^\star}$ belongs to the sequence $x_{p_k}$. Then, clearly, since $f(x_k)$ is a non-increasing sequence (by the sufficient decrease condition), for $k\geq \bar{p}^\star$, no point in $\{x:\|x-\bar{x}\|\leq \epsilon_p\}$ can be selected, thus reaching a contradiction.

\subsection{Proof of Theorem \ref{Thm2}}
We first show that $\mathcal{H}_{cl}$ is nominally well-posed and all maximal solutions are complete.
\begin{lemma}\label{Lem1}
	\textit{Let assumptions (A0)-(A2) hold, and $\tau^\star>0$, $\delta_{det}>0$, $0<\lambda_s<1<\lambda_t$, $\mu\in(0,1/\lambda_t)$, $\theta\in(0,1)$ and $\gamma\in \R_{\geq1}$. Then the hybrid closed-loop system $\mathcal{H}_{cl}$ in \eqref{HCL} is nominally well-posed.}
\end{lemma}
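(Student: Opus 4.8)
The plan is to establish nominal well-posedness by verifying the \emph{hybrid basic conditions} (Assumption 6.5 in \cite{Teel}) for the four-tuple $(C,F,D,G)$ defining $\mathcal{H}_{cl}$ in \eqref{HCL}, since by Theorem 6.8 in \cite{Teel} these conditions imply nominal well-posedness in the sense of Definition 6.2 in \cite{Teel}. Three things must be checked: closedness of $C$ and $D$; outer semicontinuity, local boundedness and convex-valuedness of $F$ relative to $C$, with $C\subset\dom F$; and outer semicontinuity and local boundedness of $G$ relative to $D$, with $D\subset\dom G$.

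The first two are immediate. The sets $C=\{(\xi,x_c):\tau\leq\tau^\star\}$ and $D=\{(\xi,x_c):\tau\geq\tau^\star\}$ are sub- and super-level sets of the continuous map $(\xi,x_c)\mapsto\tau$, hence closed (the additional restriction $\det(\col(d_0,\dots,d_{n-1}))\geq\delta_{\det}$, when imposed, is likewise closed). For the flow map, $F(\xi,x_c)=\col(\varphi(\xi,K(x,x_c,\tau^\star)),F_c)$ with $F_c$ a constant vector; since $\varphi$ is continuously differentiable and the feedback $K$ is continuous (as in the explicit choice $K=p\Delta v/\tau^\star$), $F$ is a continuous single-valued map on all of $\R^{n+l}\times\mathcal{X}_c\supset C$, so it is automatically outer semicontinuous and locally bounded relative to $C$, and convex-valued (its values being singletons).

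The crux is the jump map $G(\xi,x_c)=\col(\xi,G_c(x_c,f(x)))$, whose nontrivial part $G_{c/\tau}$ equals $g_i$ on $D_i$ with $D=\cup_{i=1}^5 D_i$. I would argue in three steps. First, each $D_i$ is closed: it is cut out by finitely many non-strict inequalities in the continuous variables, of the form $f(x)\le z-\rho(\Delta)$ or $f(x)\ge z-\rho(\Delta)$ with $f$ and $\rho$ continuous, together with membership constraints on the discrete variables $p,q,m,k$, which range in discrete (hence closed) sets. Second, each selection $g_i$ is outer semicontinuous and locally bounded relative to $D_i$: its components are built from continuous expressions patched at thresholds such as $\gamma\Delta_{k-1}=\lambda_t\Phi$, $|\lambda|=\Delta_j/2$, and $\det(\cdot)=\delta_{\det}$, and at every such threshold the piecewise rule is written with overlapping non-strict inequalities (and, for the new-direction map $\phi$, an explicit set-valued value $\{d_0,\alpha+\beta\}$ on the surface $\det(\cdot)=\delta_{\det}$) so that the graph contains all one-sided limits and is therefore closed. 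Third, by the standard patching result for finite unions of outer semicontinuous, locally bounded maps on closed domains (cf. \cite{Teel}), the map $G_{c/\tau}(x_c,f(x))=\bigcup_{\{i:(x,x_c)\in D_i\}}g_i(x_c,f(x))$ is outer semicontinuous and locally bounded relative to $D$. Since $f$ is continuous by (A0), the reindexing $(\xi,x_c)\mapsto(x_c,f(x))$ is continuous, and composing $G_{c/\tau}$ with it, together with appending the identity component $\xi^+=\xi$ and the constant $\tau^+=0$, preserves these properties; finally $D\subset\dom G$ holds by construction, as the $D_i$ cover $D$.

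The main obstacle is the second step: verifying outer semicontinuity at the internal switching surfaces and at the mutual boundaries of the $D_i$. Concretely, one must confirm that wherever two branch conditions coincide the map retains every limiting image value — e.g.\ that $D_1$ and $D_2$, which share the face $f(x)=z-\rho(\Delta)$, together produce the union of the two branch outcomes there, and that the threshold $\det(\cdot)=\delta_{\det}$ in $\phi$ is assigned the two-point set — so that no value is lost in the limit and the overall graph is closed. This is exactly the purpose of the set-valued, non-strict formulation of $G_{c/\tau}$ described after its definition; the remaining work is the routine but careful bookkeeping that each threshold in the updates of $\Delta_j$, $\Delta$, $\Phi$, $v$ and the $d_j$ is handled consistently.
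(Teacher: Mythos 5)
Your proposal is correct and follows essentially the same route as the paper's proof: closedness of $C$ and $D$, continuity (hence outer semicontinuity, local boundedness, convex-valuedness) of the single-valued flow map, outer semicontinuity and local boundedness of the jump map via its piecewise-continuous structure with non-strict inequalities and the set-valued branch of $\phi$ at $\det(\cdot)=\delta_{\det}$, and then Theorem 6.8 in \cite{Teel} applied to the hybrid basic conditions. Your write-up is in fact more careful than the paper's (explicitly treating closedness of each $D_i$, the patching across their mutual boundaries, and the composition with the continuous map $(\xi,x_c)\mapsto(x_c,f(x))$), but the underlying argument is the same.
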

\begin{proof}
	The set $C$ and $D$ are clearly closed. 
	
	
	Both $F$ and $K$ are continuous functions in $C$ and thus outer semicontinuous and locally bounded. Moreover, being both single-valued, they are also convex for every $(x,x_c)\in C$.
	
	The set-valued map $G(\cdot,f(\cdot))$ is composed by linear functions, apart for an instance of $\alpha^+$ where the norm operator is present, which is continuous in the set of definition, and an instance of $\Delta_{n-1}^+$ where the $max$ function is used, which is continuous as well. The map $G_{c\setminus\tau^\star}$ is thus piecewise continuous. As all the inequalities in the discrete dynamics are not strict, at at the points of discontinuity, it includes both left and right limit. It is thus outer semicontinous by definition.
	
	Since $G(\cdot,f(\cdot))$ is piecewise continuous, it is locally bounded by continuity.
	
	The hybrid closed-loop system $\mathcal{H}_{cl}$ thus satisfies the hybrid basic conditions (Assumption 6.5 in \cite{Teel}) and is nominally well-posed by Theorem 6.8 in \cite{Teel}.
\end{proof}
\begin{lemma}\label{Lem2} \textit{Let assumptions (A0)-(A2) hold, and $\tau^\star>0$, $\delta_{det}>0$, $0<\lambda_s<1<\lambda_t$, $\mu\in(0,1/\lambda_t)$, $\theta\in(0,1)$ and $\gamma\in \R_{\geq1}$. Then all maximal solutions to $\mathcal{H}_{cl}$ are complete.}
\end{lemma}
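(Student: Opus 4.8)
The plan is to invoke the basic existence-and-completeness result for well-posed hybrid systems (Proposition 6.10 in \cite{Teel}). Since Lemma \ref{Lem1} already establishes that $\mathcal{H}_{cl}$ in \eqref{HCL} satisfies the hybrid basic conditions, that proposition reduces the claim to three checks: (i) a local viability condition guaranteeing that a nontrivial solution exists from every point of $C\cup D$; and the exclusion of the two ways a maximal solution can fail to be complete, namely (ii) finite escape time during flow, and (iii) a jump whose image falls outside $C\cup D$. I would verify (i), then rule out (ii) and (iii).

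For (i), recall $C=\{(\xi,x_c):\tau\leq\tau^\star\}$ and $D=\{(\xi,x_c):\tau\geq\tau^\star\}$, so that $C\setminus D=\{\tau<\tau^\star\}$. At any such point the tangent cone to $C$ is the whole ambient space, since the constraint $\tau\leq\tau^\star$ is inactive; hence $F(\xi,x_c)\cap T_C(\xi,x_c)\neq\emptyset$ trivially and a nontrivial flow exists. At points of $D$ flow is impossible because $\dot\tau=1$ would instantly violate $\tau\leq\tau^\star$, so a nontrivial solution must come from a jump, which requires $D\subseteq\dom G$, i.e. that $D_1,\dots,D_5$ exhaust every admissible controller configuration present at a sampling instant. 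This bookkeeping---checking, by cases on $q\in\{0,1,2\}$, $m\in\{0,1\}$, $p\in\{-1,1\}$ and the sign of $f(x)-(z-\rho(\Delta))$, that each reachable state lies in at least one $D_i$---is the step I expect to be the main obstacle, as it is the only one that touches the detailed logic of $G_{c/\tau}$; if a reachable state at $\tau=\tau^\star$ escaped all the $D_i$, it could neither flow nor jump and the lemma would fail.

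To exclude (ii), I would observe that on any flow interval only $\tau$ and $\xi$ evolve, since $\dot{x}_c=F_c$ freezes every other controller coordinate, and $\dot\tau=1$ caps the flow duration at $\tau^\star$ before $D$ is reached. Over this finite horizon the input $u=K(x,x_c,\tau^\star)$ is the bounded steering input associated with \eqref{PMV}; by the standing assumption that bounded inputs keep $\zeta$ bounded together with continuity of $\varphi$, the solution of $\dot\xi=\varphi(\xi,K)$ remains bounded on $[0,\tau^\star]$. Hence no trajectory can escape to infinity in finite time.

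To exclude (iii), note that every jump sets $\tau^+=0\leq\tau^\star$, so the image of $G$ satisfies $\tau^+\in C$; moreover each component update in $G_{c/\tau}$ keeps the state inside $\mathcal{X}_c$, because the step sizes $\Delta_j,\Delta$ and the global step $\Phi$ are only multiplied by the nonnegative constants $\gamma,\theta,\mu,\lambda_s,\lambda_t$ (or reset to nonnegative bounds), the directions $d_j$ stay in $\R^n$, and the discrete counters $p,m,k,q$ remain in their finite ranges $\{-1,1\}$, $\{0,1\}$, $\{0,\dots,n\}$, $\{0,1,2\}$. Therefore $G(D)\subseteq C\subseteq C\cup D$. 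Having established (i) and ruled out (ii) and (iii), Proposition 6.10 in \cite{Teel} yields that every maximal solution has an unbounded hybrid time domain---flows of length $\tau^\star$ alternate with jumps indefinitely---so all maximal solutions to $\mathcal{H}_{cl}$ are complete.
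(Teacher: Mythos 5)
Your proposal follows the same skeleton as the paper's proof---Proposition 6.10 of \cite{Teel} plus three checks: viability on $C\setminus D$, the jump-closure property $G(D)\subset C\cup D$, and exclusion of finite escape time---and the first two checks are handled essentially as in the paper (the paper argues viability via $0\in T_C$ for the frozen coordinates and $\dot\tau=1$ at $\tau=0$; your inactive-constraint argument at $\tau<\tau^\star$ makes the same point). The genuine difference is in how finite escape time is excluded. The paper does it by deriving explicit, solution-wide bounds on \emph{every} state variable: using (A2), the lower bound $\delta_{\text{det}}$ on $\det(\col(d_0,\dots,d_{n-1}))$, and the fact that newly computed directions have length equal to the distance between explored points, it bounds $\|d_j\|$ by the diameter of a sublevel set and then bounds $\Delta_j$, $\Phi$, $\lambda$, $\alpha$, $\bar\alpha$, $z$ and $\|x\|_{\cA^\star}$ in terms of these. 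You instead observe that the timer caps each flow interval at $\tau^\star$, that all controller coordinates are frozen during flow, and that the standing assumptions on \eqref{PMV} (the steering property of $K$ and boundedness of $\zeta$ under bounded inputs) keep $\xi$ bounded on each interval; since only finitely many jumps can occur in finite time, escape is impossible. Your route is lighter and suffices for the lemma; the paper's heavier route is not wasted, though, because the explicit bounds ($d_{\max}$, $\Delta_{\max}$, etc.) are reused verbatim in the proofs of Theorem \ref{TH4} and Corollary \ref{Cor1} to construct the set of admissible initial conditions. One caveat: you flag the exhaustiveness check $D\subseteq\dom G$ as the main obstacle but leave it undone, whereas the paper disposes of it by fiat, \emph{defining} $D=\cup_{i=1}^{5}D_i$ in the Appendix; neither text actually performs the case analysis, and your instinct that this is the delicate point is well founded, since as written the sets $D_i$ do not cover configurations such as $p=-1$, $q=0$, $m=0$, $f(x)<z-\rho(\Delta)$, which are unreachable along solutions but admissible as initial conditions.
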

\begin{proof}
	We prove completeness of maximal solutions to $\mathcal{H}_{cl}$ by invoking Proposition 6.10 in \cite{Teel} on existence of solutions, and showing that no maximal solution jumps outside of $C\cup D$ or has finite escape time.
	
	We first show that the viability condition in Proposition 6.10 in \cite{Teel} holds for all $(\xi,x_c)\in C\setminus D$, namely that $F(\xi,x_c)\cap T_C(\xi,x_c)\neq \emptyset$, with $T_C:\R^{n+l}\x\mathcal{X}_c\to\R^{n+l}\x\mathcal{X}_c$ the Bouligand tangent cone of $C$ at $(\xi,x_c)$. Since $0\in T_C(\xi,x_c)$ always, the viability condition is readily satisfied for all the state variables apart from $\xi$ and $\tau$. As the projection onto the $\xi$-subspace of $C\setminus D$ is empty, the viability condition is satisfied also for the $\xi$ state variable. Regarding the timer $\tau$, define the projection of $C$ and $D$ onto the $\tau$-subspace as $C_\tau:=[0,\tau^\star]$ and $D_{\tau}:=[\tau^\star,\infty)$.
	As the set $C_\tau\setminus D_\tau=[0,\tau^\star)$ is open to the right, we only need to check the viability condition at $\tau=0$. Since at $\tau = 0$, $\dot \tau = 1$, the viability condition is satisfied also for $\tau$.
	
	
	Then, by Proposition 6.10 in \cite{Teel}, there exists a nontrivial solution from every initial condition in $\R^{n+l}\times\mathcal{X}_c$. Moreover, since $G(C\cup D)\subset C\cup D$, the solutions to $\mathcal{H}_{cl}$ or have finite time escape or are complete. Notice that for all solutions to $\mathcal{H}_{cl}$, $\zeta(t)$ does not have finite escape time by assumption. We show completeness by showing that all the other components of $(\xi,x_c)$ for all solutions to $\mathcal{H}_{cl}$ are bounded. Indeed, by condition (A2) and the update rule for the new directions \eqref{NewD}, for all initial conditions $(\xi(0,0),x_c(0,0))\in\R^{n+l}\times\mathcal{X}_c$, the state variables $d_j$, with $j=0,1,...,n$, are upper bounded in norm by
	\begin{align*}
	d_{max}:=\max_{j=0,1,...,n}\{\|d_j(0,0)\|, \textnormal{diam}(\mathcal{L}_f(\max\{f(x(0,0)),z(0,0)\}))\},
	\end{align*} 
	where, given $A\subset \R^n$, $\text{diam}(A):= \sup_{x,y\in A} \|x-y\|$. Moreover, as the determinant of the matrix composed by the set of directions is lower bounded by $\delta_{\text{det}}>0$, the directions $d_j$ are also lower bounded in norm. Denote the lower bound as $d_{\min}\in\R$. Then $\Delta_j$, $j=0,1,...,n$, are upper bounded by
	\begin{align*}
	\begin{array}{ll}
	\Delta_{\max}:=(1+\gamma)\max\left\{\max_{j=0,1,...,n}\Delta_j(0,0), \frac{\textnormal{diam}(\mathcal{L}_f(\max\{f(x(0,0)),z(0,0)\}))}{d_{\min}},\lambda_s\Phi(0,0)\right\}.
	\end{array}
	\end{align*}
	Based on the same reasoning, $\Phi(t,j)\leq \Phi(0,0)$, $|\lambda(t,j)|\leq d_{\max}\Delta_{\max}$, $\|\alpha(t,j)\|\leq n d_{\max}\Delta_{\max}$, $\bar{\alpha}(t,j)\leq n d_{\max}\Delta_{\max}$, $z(t,j)\leq \max\{z(0,0), f(x(0,0))\}$, $\|x(t,j)\|_{\cA^\star}\leq d_{\max}^2\Delta_{\max}^2 d$ for all $(t,j)\in \dom (\xi,x_c)$. Hence any state variable of $\mathcal{H}_{cl}$ is bounded, thus all the maximal solutions to $\mathcal{H}_{cl}$ are complete.
\end{proof}
In order to prove stability of $\cA$, define the Lyapunov function $V(\xi,x_c)=z-f(\cA^\star)$. We stress that, given assumption (A1), $f(\cA^\star)$ is a scalar. Since $V$ is $\mathcal{C}^1$, it is possible to bound the growth of $V$ along any maximal solution $\phi$ to $\mathcal{H}_{cl}$ as
\begin{align}
V(\phi(\bar{t},\bar{j}))-V(\phi(\underbar{$t$},\underbar{$j$}))\leq \int_{\underbar{$t$}}^{\bar{t}} \frac{d}{dt} V(\phi(t,j(t)))dt +\nonumber\sum_{j=\underbar{$j$}+1}^{\bar{j}} [V(\phi(t(j),j))-V(\phi(t(j),j-1))],
\end{align}
where
\begin{align}
&\frac{d}{dt} V(\phi(t,j(t)))=\dot z(t,j(t)) = 0, \label{dV}\\
&V(\phi(t(j),j))-V(\phi(t(j),j-1))=\begin{cases}
0 & x_c\in D_{2,5} \\
z(t(j),j)-z(t(j),j-1)\leq 0 & x_c\in D_{1,3,4}, \label{V+}
\end{cases}
\end{align}
where $t(j)$ and $j(t)$ denote respectively the least time $t$ and the least index $j$ such that $(t,j)\in\dom\phi$, $D_{2,5}:=D_2\cup D_5$ and $D_{1,3,4}:=D_1\cup D_3\cup D_4$.

The above conditions follow directly from the definition of $F_c$ and $G_c$. Indeed $z$ changes only during jumps, and in that case, for $x\notin \cA^\star$, it can decrease for $x_c\in D_{1,3,4}$ and remain unchanged for $x_c\in D_{2,5}$.
However the Lyapunov function $V$ is not strictly nonincreasing since there exist initial conditions for $z$ and $x$ such that $z(0,0)<f(x(0,0))$. However, after at most 3 timer-cycles, when $D_3$ is reached, $z$ gets updated to $f(x)$.

The above nonincreasing conditions on $V$ are thus only valid for $t \geq 3\tau^\star$ and $j\geq2$, where $(t,j)=(0,0)$ initially. As we show next, this does not hinder the stability of the set $\cA$ and convergence to the set $\cA_e$ for the hybrid system $\mathcal{H}_{cl}$.  

By the above discussion, compactness of $\mathcal{A}$, the conditions on $V$, and Theorem 7.6 in \cite{Inv}, for all $\varepsilon_1>0$ there exists a $\delta_1>0$ such that
\begin{align}\label{pr1}
\|(\xi(3\tau^\star,2),x_c(3\tau^\star,2))\|_{\mathcal{A}}<\delta_1\implies \|(\xi(t,j),x_c(t,j))\|_\mathcal{A}<\varepsilon_1\ \forall t+j\geq3\tau^\star + 2
\end{align} 
Noticing that the set $\mathcal{A}$ is invariant for $\mathcal{H}_{cl}$, it follows, by Lemma \ref{Lem1} and Corollary 4.8 in \cite{Hyb2}, that
\begin{align}\label{pr2}
\forall \varepsilon_2>0,\forall T>0, \exists \delta_2>0:\|(\xi(0,0),x_c(0,0))\|_\mathcal{A}&<\delta_2 \implies\nonumber\\ &\|(\xi(t,j),x_c(t,j))\|<\varepsilon_2\ \forall t+j<T
\end{align}
By choosing in \eqref{pr2} $T=3\tau^\star+2$ and $\varepsilon_2=\delta_1$, we see that the definition of uniform stability is recovered with $\varepsilon = \varepsilon_1$ and $\delta = \delta_2$.
Thus $\mathcal{A}$ is uniformly stable for $\mathcal{H}_{cl}$.

To show attractivity of $\mathcal{A}_e$ we invoke Theorem 4.7 in \cite{Inv}, setting $\mathcal{U}:=\mathcal{R}_{\geq3\tau^\star+2}(\R^{n+l}\times\mathcal{X}_c)$, namely the set of states that are reachable after $3\tau^\star+2$ (see Definition 6.15 in \cite{Teel}). Notice that $\mathcal{U}$ is forward invariant due to Lemma \ref{Lem2} and the definition of reachable set. By referring to the remark at the bottom of the proof of Theorem 4.7, we set $T=3\tau^\star$ and $J=2$, and defining $u_C$ and $u_D$ in the statement of Theorem 4.7 respectively as \eqref{dV} and \eqref{V+}, for some $r\in V(\R^{n+l}\times\mathcal{X}_c)$, the trajectories of $\mathcal{H}_{cl}$ approach the largest weakly invariant subset of
\begin{equation}
V^{-1}(r)\cap \mathcal{U}\cap[u_C^{-1}(0)\cup(u_D^{-1}(0)\cap G(u_D^{-1}(0)))].
\end{equation}
The Lyapunov function $V$ is constant along solutions to $\mathcal{H}_{cl}$ in $D_2$, $D_5$ and the set  $\mathcal{A}_e$. By $m^+=1$ in $g_2$ and by $q^+=0$ in $g_5$ we can conclude that neither $D_2$ nor $D_5$ are (weakly) invariant. Indeed $\mathcal{A}_e$ is actually the largest (weakly) invariant set contained in $\mathcal{U}$ where $V$ is constant along maximal solutions whose range is contained in $\mathcal{U}$.

\subsection{Proof of Theorem \ref{TH3}}

	We will first show that, for any $\bar{n}_s>0$, these class of algorithms can potentially remain stuck at every $x\in\R^n$. As such, by continuous differentiability of $f$, for every compact set $\mathcal{C}\subset\R^n$, there exists a maximum gradient norm $\nabla f_\mathcal{C}$. Consider, without loss of generality, a unique step size variable $\Delta>0$ and a single direction $d\in\R^n$. By the mean value theorem, it follows that, for all $x,y\in\mathcal{C}$, $|f(x)-f(w)|\leq \nabla f_\mathcal{C}\|x-w\|$, and, for $w=x-p\Delta d$, at iteration $k$ in the algorithm
\begin{equation}\label{MVT}
|f(x_k)-f(x_k-p_k\Delta_k d_k)|\leq \nabla f_\mathcal{C} \Delta_k \bar{d},
\end{equation}
where, by continuous differentiability of $f$, $\nabla f_\mathcal{C}<\infty$ for all compact $\mathcal{C}\subset\R^n$, and $\|d_k\|\leq \bar{d}>0$.

Given a noise bound $\bar{n}_s>0$, and remembering that $\Delta_k\to0$ for $k\to\infty$, there exists a $k^\star>0$ such that
\begin{equation}\label{Condition}
\nabla f_\mathcal{C} \Delta_{k^\star} \bar{d}+\rho(\Delta_{k^\star})<\nabla f_\mathcal{C} \Delta_{k^\star} \bar{d}\frac{1}{1-\theta}+\iota \Delta_{k^\star}^\star<\bar{n}_s,
\end{equation}
with $\iota>0$ and $\Delta_{k^\star}^\star>0$ the value of the series $\sum_{n=0}^{\infty} (\theta^n\Delta_{k^\star})^{\left(\frac{1}{\theta^n\Delta_{k^\star}}\right)}$, proved to be convergent in Lemma \ref{Lem3}. The term $1/(1-\theta)$ follows by noticing that given iteration $k_1$, after $k_2$ iterations of blocked points, then $\Delta_{k_1+k_2}=\theta^{k_2}\Delta_{k_1}$, and, as $k_2\to\infty$, if we sum all the terms, we have a geometric series. We defined the bound in this way, since we build a noise function by iteratively summing previous noise values to produce the new one. 

A noise signal defined to be $n_s(k)=0$ for $k<k^\star$ and $n_s(k) = \nabla f_\mathcal{C} \Delta_{k} \bar{d}+\rho(\Delta_{k})+n_s(k-1)$ for $k\geq k^\star$ will keep the algorithm stuck in $x=x_{k^\star}$ for all $k\geq k^\star$.

The reason is that the following relationship will always be satisfied
\begin{equation*}
f(x_k+p_k\Delta_kd_k) + n_s(k) \geq f(x_k) + n_s(k-1) -\rho(\Delta_{k}),
\end{equation*}
where $f(x_k+p_k\Delta_kd_k) + n_s(k)$ is the cost function measurement obtained at iteration $k$ and $f(x_k) + n_s(k-1)$ is the cost function measurement obtained at the previous iteration. Namely no improvement is ever found in any direction, since
\begin{align} \label{W1}
n_s(k) = \nabla f_\mathcal{C} \Delta_{k} \bar{d}+\rho(\Delta_k)+n_s(k-1) \geq f(x_k)- f(x_k+p_k\Delta_kd_k) -\rho(\Delta_k) + n_s(k-1).
\end{align}

Now notice that at iterations where
\begin{equation*}
f(x_k+p_k\Delta_kd_k)\geq f(x_k) -\rho(\Delta_{k}),
\end{equation*}
namely at iterations where no improvement would be found in case of no noise, the noise could act in order to mistakenly consider an improvement. Indeed in that case, with a noise of the form
\begin{equation}\label{W2}
n_s(k) = -\nabla f_\mathcal{C} \Delta_{k} \bar{d}-\rho(\Delta_{k})+n_s(k-1),
\end{equation}
for $k\geq k_1^\star\geq0$ and $n_s(k_1^\star)=0$, a wrong descent direction will be picked from everywhere in $\mathcal{C}$.

Alternating the noise values of \eqref{W1} and \eqref{W2}, by considering $n_s(k-1)=0$ when switching strategy, as long as $\Delta_k\leq \Delta_{k^\star}$, can steer the algorithm to every point in $\mathcal{C}$.

Consider now a compact set $\mathcal{C}_1\supset \mathcal{C}$ and denote the maximum gradient norm of $f$ on $\mathcal{C}_1$ as $\nabla f_{\mathcal{C}_1}$, where $\nabla f_{\mathcal{C}_1}\geq \nabla f_{\mathcal{C}}$.

Applying the noise \eqref{W1} in $\mathcal{C}$, it is possible to notice that there exists a $k^\star_1\geq k^\star$ such that for $k=k^\star_1$ condition \eqref{Condition} is satisfied for $\nabla f_{\mathcal{C}_1}$. 

Now, by switching between noise expressions \eqref{W1} and \eqref{W2}, guaranteeing that $\Delta_k\leq\Delta_{k_1^\star}$, makes it possible to steer the sequence of iterate everywhere in $\mathcal{C}_1$ and in particular outside $\mathcal{C}$.

It is thus clear that repeating this procedure iteratively can make the sequence of iterate leave any compact sub-level set of $f$.

\subsection{Proof of Theorem \ref{TH4}}

	Let $\epsilon_1>\epsilon_2>0$ and constants $0<\lambda_s<1<\lambda_t$ be given.
	
	Notice that, by the bounds on the state variables defined in the proof of Lemma \ref{Lem2}, it is always possible to choose $\delta>0$ to be the maximum radius of all the balls, one per state variable composing $(x,x_c)$, such that the maximum of the bounds reported in the proof of Lemma \ref{Lem2} is upper bounded by $\epsilon_1$. Namely pick $\delta$ such that for all initial conditions in $\delta\mathbb{B}(\cA)$, 
	\begin{align*}
		\max\limits_{\delta>0}\max\limits_{(\xi(0,0),x_c(0,0))\in\delta \mathbb{B}(\cA)}\{d_{\max}, \Delta_{\max}, &\Phi(0,0), d_{\max}\Delta_{\max},  n d_{\max}\Delta_{\max},  \\&\max\{z(0,0),f(x(0,0))\}-f(\cA^\star),   d_{\max}^2\Delta_{\max}^2 \} <\epsilon_1.
	\end{align*}
	Then pick $S_{\Phi}=(0,\delta]$ and notice that, by \eqref{NewPhi}, for all $\underline{\Phi}\in S_{\Phi}$, $(\xi(0,0),x_c(0,0))\in\delta\mathbb{B}(\cA)\implies (\xi(t,j),x_c(t,j))\in\epsilon_1\mathbb{B}(\cA)$ for all $(t,j)\in\dom (\xi,x_c)$.

	Denote as $\mathcal{L}_{\epsilon}$ the largest sublevel set of $f$ subset of the closure of $\min\{\epsilon_2,\delta\}\mathbb{B}(\cA^\star)$, and as $f_{\epsilon}$ the biggest value that $f$ achieves in $\mathcal{L}_{\epsilon}$. 

	
	Pick $\mathcal{B}:=\text{cl}\{x\in\R^n:x\in\epsilon_1\mathbb{B}(\cA) \textnormal{ and }x\notin\mathcal{L}_{\epsilon}\}$.
	
	By assumptions (A0)-(A2) and the fact that the set of directions $d_j$, with $j=0,1,...,n-1$, always span $\R^n$, it follows (from Theorem \ref{TH6} and the fact that for outside any neighborhood small enough of the local maxima and local minima, the norm of the gradient of $f$ is lower bounded away from zero) that for every compact set in $\R^n$ not containing a local minimum, there exists a $\bar{\Phi}>0$, such that for all $\Phi\in(0,\bar{\Phi})$, there exists at least one direction, that, rescaled by $\lambda_s\Phi$, produces a sufficient decrease of $f$ from every point in that compact set. 
%
	
	Since $\mathcal{B}$ is compact and does not contain a local minimum, it implies, by the above reasoning, that there exists $\bar{\Phi}>0$, such that for all $\Phi_{bound}\in(0,\bar{\Phi})$ at least one direction is a descent direction for $\Phi=\Phi_{bound}$, hence, after at most $n$ iterations, $z$ decreases. 
	
	Define now the Lyapunov candidate function $V(\xi,x_c)=z-f_\epsilon$, and notice that it satisfies \eqref{dV} and \eqref{V+}, after at most $3\tau^\star+2$, on $\epsilon_1\mathbb{B}(\cA)\setminus \min\{\epsilon_2,\delta\}\mathbb{B}(\cA)$. By Lemma \ref{Lem} and picking $\underline{\Phi}\in(0,\bar{\Phi})$, it follows that there exist $(T,J)\in\dom (\xi,x_c)$ such that for all $(t,j)\in\dom (\xi,x_c)$ such that $t+j\geq T+J$, $\Phi=\underline{\Phi}$ and $\Delta_{j}=\lambda_s\underline{\Phi}$. This implies that
	as long as $x\in\mathcal{B}$, after at most every $n$ iterations, $z$ decreases, and thus, by applying Theorem 4.7 in \cite{Inv}, $(\xi,x_c)\to\epsilon_2\mathbb{B}(\cA)$.
	To conclude the proof, choose $\underline{\Phi}\in(0,\min\{\delta,\bar{\Phi}\})$.


\subsection{Proof of Corollary \ref{Cor1}}

In order to give a Lyapunov characterization of \eqref{Bound}, we first extend the results of \cite{AP1} to the hybrid systems framework, and, by properly defining a Lyapunov function, show the necessity of the bound based on \eqref{Bound} in order to guarantee stability of $\mathcal{H}_{cl}$.

\begin{theorem}
	Consider a function $V:U\to\R$, with $U\subset \R^n$ an open neighborhood of the invariant set $\mathcal{A}\subset\R^n$ for the hybrid system $\mathcal{H}$, and assume that $\mathcal{H}$ is nominally well-posed and pre-forward complete from $U$. We assume the following conditions are satisfied.
	\begin{enumerate}
		\item[$\cdot$] Condition 1:  $V(\mathcal{A})=0$ and $\forall x\in U$: $\alpha(\|x\|_{\mathcal{A}})\leq V(x)\leq \beta(\|x\|_{\mathcal{A}})$.  The functions $\alpha(\cdot)$ and $\beta(\cdot)$ are class $\mathcal{K}$ functions.
		\item[$\cdot$] Condition 2: There exists $T>0$ and there exists an open set $U'\subset U$ which contains $\mathcal{A}$ such that, for each hybrid arc $x:\dom x\to\R^n$ solution to $\mathcal{H}$, there exists an increasing sequence of times $(t_k^\star, j_k^\star)\in\dom x\ (k\in\mathbb{N})$ such that $(t_{k+1}^\star +j_{k+1}^\star)-(t_{k}^\star+j_k^\star)\leq T\ (\forall k\in\mathbb{N})$, such that $\forall k\in\mathbb{N}$ and $\forall x(t_k^\star,j_k^\star)\in U'\setminus \{\mathcal{A}\}$:
		\begin{equation}
		V(x(t_{k+1}^\star,j_{k+1}^\star))-V(x(t_{k}^\star,j_k^\star))\leq 0
		\end{equation}
		Then the set $\mathcal{A}$ is stable.
	\end{enumerate}
\end{theorem}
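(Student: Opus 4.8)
The plan is to establish stability directly in the $\varepsilon$--$\delta$ sense by coupling the sampled non-increase of $V$ from Condition~2 with a short-time continuity estimate coming from nominal well-posedness, in the spirit of the bound \eqref{pr2} (Corollary~4.8 in \cite{Hyb2}) already used in the proof of Theorem~\ref{Thm2}. Fix $\varepsilon>0$ and, shrinking it if necessary, assume $\varepsilon\mathbb{B}(\mathcal{A})\subset U'$. The whole argument rests on two uniform facts: the sampled sequence $(t_k^\star,j_k^\star)$ never has a gap larger than $T$ in $t+j$, and solutions depend ``continuously enough'' on their starting state over any window of hybrid length at most $T$.

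First I would extract an inter-sample bound. Using nominal well-posedness, pre-forward completeness from $U$, outer semicontinuity of the solution map, and invariance of $\mathcal{A}$ (so solutions issued from $\mathcal{A}$ remain in $\mathcal{A}$), the reachable set in hybrid time $t+j\le T$ from $r\mathbb{B}(\mathcal{A})$ collapses onto $\mathcal{A}$ as $r\to 0$. Hence there exists $\rho_1\in(0,\varepsilon]$ with $\rho_1\mathbb{B}(\mathcal{A})\subset U'$ such that every solution whose state at some $(t_k^\star,j_k^\star)$ lies in $\rho_1\mathbb{B}(\mathcal{A})$ satisfies $\|x(t,j)\|_{\mathcal{A}}<\varepsilon$ for all $(t,j)\in\dom x$ with $(t+j)-(t_k^\star+j_k^\star)\le T$. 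Time-invariance of $\mathcal{H}$ is what lets me apply this as a ``restart'' from each sampling instant rather than only from $(0,0)$.

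Next I would confine the sampled subsequence through $V$. By Condition~1 the functions $\alpha,\beta$ are class~$\mathcal{K}$ and invertible, so I may pick $\rho_2\in(0,\rho_1]$ with $\beta(\rho_2)\le\alpha(\rho_1)$. If some sampling state $x(t_k^\star,j_k^\star)$ lies in $\rho_2\mathbb{B}(\mathcal{A})$, then $V(x(t_k^\star,j_k^\star))\le\beta(\rho_2)\le\alpha(\rho_1)$; Condition~2 then yields $V(x(t_{k'}^\star,j_{k'}^\star))\le\alpha(\rho_1)$ for every $k'\ge k$ (the degenerate case $x(t_{k'}^\star,j_{k'}^\star)\in\mathcal{A}$ being handled trivially by invariance of $\mathcal{A}$, where $V=0$), whence $\|x(t_{k'}^\star,j_{k'}^\star)\|_{\mathcal{A}}\le\alpha^{-1}(\alpha(\rho_1))=\rho_1$. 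Thus all later sampling states stay in $\rho_1\mathbb{B}(\mathcal{A})$, and combining this with the inter-sample bound of the previous step---each gap being at most $T$---the whole trajectory from $(t_k^\star,j_k^\star)$ onwards remains in $\varepsilon\mathbb{B}(\mathcal{A})$.

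It then remains to fix $\delta$ and cover the initial window up to the first sampling instant. Applying the same continuity estimate at $(0,0)$, there is $\delta>0$ such that $x(0,0)\in\delta\mathbb{B}(\mathcal{A})$ forces $\|x(t,j)\|_{\mathcal{A}}<\rho_2$ for all $(t,j)$ with $t+j\le T$; in particular the first sampling state lies in $\rho_2\mathbb{B}(\mathcal{A})$ and the initial segment stays in $\rho_2\mathbb{B}(\mathcal{A})\subset\varepsilon\mathbb{B}(\mathcal{A})$ (using that, by construction, the first sampling instant satisfies $t_0^\star+j_0^\star\le T$, which I may assume without loss of generality). Chaining the initial window with the confined tail gives $x(t,j)\in\varepsilon\mathbb{B}(\mathcal{A})$ for all $(t,j)\in\dom x$, which is exactly stability. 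The main obstacle I anticipate is precisely the between-samples behaviour, where $V$ carries no information: the argument only closes because the gap $T$ and the continuity/reachability estimate are \emph{uniform} over all solutions and over all restart states in $\rho_1\mathbb{B}(\mathcal{A})$, a semiglobal property I would justify through the outer semicontinuity of the solution set guaranteed by nominal well-posedness; losing either uniformity would break the patching of the windows.
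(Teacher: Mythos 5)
Your proposal is correct and follows essentially the same route as the paper's proof: both confine the sampled states via the class-$\mathcal{K}$ sandwich of Condition~1 together with the sampled non-increase of $V$ from Condition~2, and both invoke the finite-horizon continuity result (Corollary~4.8, a consequence of nominal well-posedness and invariance of $\mathcal{A}$) twice—once to cover the initial window up to the first sampling instant and once to patch the inter-sample windows of hybrid length at most $T$. Your version merely orders the quantifiers more carefully (your $\rho_1,\rho_2,\delta$ correspond to the paper's $\epsilon',\delta',\delta''$), which in fact cleans up the paper's slightly tangled labeling of which ball is the restart set and which is the containment set in its first application of Corollary~4.8.
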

\begin{proof}
	Consider the set $\mathcal{A}+\epsilon\mathbb{B}$, with $\epsilon>0$ small enough, such that $\mathcal{A}+\epsilon\mathbb{B}\subset U'$. Apply Corollary 4.8 in \cite{Goebel2006} given $\epsilon$ and $T$, to get $\epsilon'>0$, since $\mathcal{A}$ is an invariant set, such that for all $x(t_0,j_0)\in\mathcal{A}+\epsilon\mathbb{B}$, $\|x(t,j)\|_\mathcal{A}\leq \epsilon'$ for all $(t,j)\in [(t_0,j_0),(t_0+T',j_0+J')]\subset \dom x$, with $T'+J'\leq T$.
	
	Define $\delta':=\beta^{-1}\alpha(\epsilon')$ and consider the closed ball $\mathcal{A}+\delta'\mathbb{B}$. Apply now Corollary 4.8 again, to $\delta'$ and $T$, and get $\delta''$. For all $x(t_0,j_0)\in\mathcal{A}+\delta''\mathbb{B}$, $\|x(t,j)\|_\mathcal{A}\leq \delta'$ for all $(t,j)\in [(t_0,j_0),(t_0+T',j_0+J')]\subset \dom x$, with $T'+J'\leq T$. By Condition 2, there exists a $k_0\in\mathbb{N}$ such that $(t^\star_{k_0}+j_{k_0}^\star)-(t_0+j_0)\leq T$, implying that $\|x(t^\star_{k_0},j_{k}^\star)\|_\mathcal{A}\leq\delta'$.
	
	For $x(t^\star_{k_0},j_{k_0}^\star)\in\mathcal{A}+\delta'\mathbb{B}$, $\beta(\|x(t^\star_{k_0},j_{k_0}^\star)\|_\mathcal{A})\leq\alpha(\epsilon')$, and thus $V(x(t^\star_{k_0},j_{k_0}^\star))\leq\alpha(\epsilon')$.
	
	Since, by Condition 2, $V(x(t^\star_{k_0+1},j_{k_0+1}^\star))\leq V(x(t^\star_{k_0},j_{k_0}^\star))$, $V(x(t^\star_{k_0+1},j_{k_0+1}^\star))\leq \alpha(\epsilon')$. Since $\alpha(\|x(t^\star_{k_0+1},j_{k_0+1}^\star)\|_\mathcal{A})\leq V(x(t^\star_{k_0+1},j_{k_0+1}^\star))\leq \alpha(\epsilon')$, one obtains $\|x(t^\star_{k_0+1},j_{k_0+1}^\star)\|_\mathcal{A}\leq\epsilon'$.
	
	By the same argument, $\forall n\in\mathbb{N}$, $x(t^\star_{k_0+n},j_{k_0+n}^\star)\in\mathcal{A}+\epsilon'\mathbb{B}$. It follows, from Corollary 4.8, that $\forall (t,j)\geq (t_0,j_0)$, $\|x(t,j)\|_\mathcal{A}\leq\epsilon$. Thus uniform stability is proved.
\end{proof}
Now consider the Lyapunov function $V(x)=f(x)-f(\mathcal{A}^\star)$ for the case of lower bounded step size $\Phi$, with lower bound $\underline{\Phi}>0$, and the sequence of times $(t(j_k^\star),j_k^\star)$, where $t(j)$ is the biggest time $t$ such that $(t,j)\in\dom x$, $(t(j_0^\star),j_0^\star)$ is such that $j_0^\star\geq j_0+3$ and, given $(t(j_k^\star),j_k^\star)$, $(t(j_{k+1}^\star),j_{k+1}^\star)$ is computed as $(t(j_{k}^\star+1),j_{k}^\star+1)$ if $(\xi(t(j_{k}^\star+1),j_{k}^\star+1),x_c(t(j_{k}^\star+1),j_{k}^\star+1))\notin D_2$.
We claim that such $V$ satisfies Condition 1 and Condition 2.

\textit{Condition 1} Clearly $V(\mathcal{A}^\star) = 0$, moreover, since $f$ is lower bounded, with lower bound $f(\mathcal{A}^\star)$ and continuous, $V$ can be bounded by
\begin{align*}
\bar{\alpha}(\|x\|_{\mathcal{A}^\star})\leq V(x) \leq \bar{\beta}(\|x\|_{\cA^\star}),
\end{align*}
where
\begin{align*}
\bar{\alpha}(\|x\|_{\mathcal{A}^\star})&:=(-e^{-\|x\|_{\cA^\star}}+1) \inf_{\bar{x}\in\R^n:\|\bar{x}\|_{\cA^\star}\geq \|x\|_{\cA^\star}} (f(\bar{x})-f(\cA^\star))\\
\bar{\beta}(\|x\|_{\cA^\star})&:=\|x\|_{\cA^\star}\max_{\bar{x}\in\R^n:\|\bar{x}\|_{\cA^\star}\leq \|x\|_{\cA^\star}} \|\nabla f(\bar{x})\|
\end{align*}
\textit{Condition 2} We first show that $(t(j_{k}^\star)+j_{k}^\star)-(t(j_{k+1}^\star),j_{k+1}^\star)\leq T=3T'+3$, with $T'>0$ the period of the timer.

Notice that $D_2$ is defined for $q\in\{0,1\}$, $\dot q = 0$ always, and for $(\xi,x_c)\in D_2$ $q^+=q+1$. Since the jump rule is defined by the timer only, given $(\xi(t(j_k^\star),j_k^\star),x_c(t(j_k^\star),j_k^\star))\in D\setminus D_2$, there exists $\bar{j}\in\{1,2,3\}$ such that $(\xi(t(j_k^\star+\bar{j}),j_k^\star+\bar{j}),x_c(t(j_k^\star+\bar{j}),j_k^\star+\bar{j}))\in D\setminus D_2$ again.

At $(t(j_k^\star),j_k^\star)$, $(\xi(t(j_k^\star),j_k^\star),x(t(j_k^\star),j_k^\star))\in D$.

As a cycle in the algorithm results in the state $x(t(j),j)$ moving between $D_i$ in the following order 
\begin{equation}
D_5 \rightarrow D_1\rightarrow...\rightarrow D_1\rightarrow D_2 \rightarrow D_3 \rightarrow D_4\rightarrow ...\rightarrow D_4\rightarrow D_5, 
\end{equation}
we can notice the following:\\
If $x(t(j_k^\star),j_k^\star)\in D_1$, then $y(t(j_k^\star),j_k^\star)=f(x(t(j_k^\star),j_k^\star))+n_s(t(j_k^\star),j_k^\star)\leq z(t(j_k^\star),j_k^\star)-\rho(\Delta(t(j_k^\star),j_k^\star))\leq f(x(t(j_k^\star-1),j_k^\star-1))+\bar{n_s}-\rho(\Delta(t(j_k^\star),j_k^\star))$.\\
If $x(t(j_k^\star),j_k^\star)\in D_2$, then we do not consider the Lyapunov function there.\\
If $x(t(j_k^\star),j_k^\star)\in D_3$, then $f(x(t(j_k^\star),j_k^\star))= f(x(t(j_k^\star-2),j_k^\star-2))$.\\
If $x(t(j_k^\star),j_k^\star)\in D_4$, then $y(t(j_k^\star),j_k^\star)=f(x(t(j_k^\star),j_k^\star))+n_s(t(j_k^\star),j_k^\star)\leq z(t(j_k^\star),j_k^\star)-\rho(\Delta(t(j_k^\star),j_k^\star))\leq f(x(t(j_k^\star-1),j_k^\star-1))+\bar{n_s}-\rho(\Delta(t(j_k^\star),j_k^\star))$.\\
If $x(t(j_k^\star),j_k^\star)\in D_5$, then $f(x(t(j_k^\star),j_k^\star)) = f(x(t(j_k^\star-2),j_k^\star-2))$.

If we assume $\bar{n_s}=0$, then we can notice that $V(x(t(j_{k+1}^\star),j_{k+1}^\star))-V(x(t(j_{k}^\star),j_{k}^\star))\leq 0$ for all hybrid arcs $x$, for all $k\in \mathbb{N}$.
\begin{align}
V(x(t(j_{k+1}^\star),j_{k+1}^\star))&-V(x(t(j_{k}^\star),j_{k}^\star))= \nonumber\\
&\begin{cases}
f(x(t(j_{k+1}^\star),j_{k+1}^\star))-f(x(t(j_{k}^\star),j_{k}^\star) & \text{ if }x(t(j_{k+1}^\star),j_k^\star)\in D_{1,4}\\
0 & \text{ if }x(t(j_{k+1}^\star),j_k^\star)\in D_{3,5}
\end{cases}
\end{align}

In case we assume noise acting on the cost function measurement, $n_s(t,j)\leq \bar{n_s}$, then, for $x(t(j_{k+1}^\star),j_k^\star)\in D_{1,4}$,
\begin{align}
V(x(t(j_{k+1}^\star),j_{k+1}^\star))-V(x(t(j_{k}^\star),j_{k}^\star))=f(x(t(j_{k+1}^\star),j_{k+1}^\star))-f(x(t(j_{k}^\star),j_{k}^\star))= \\ \delta_f(t(j_{k+1}^\star),j_{k+1}^\star).
\end{align}
Hence $\delta_f(t(j_{k+1}^\star),j_{k+1}^\star)$ totally determines the sign of $V(x(t(j_{k+1}^\star),j_{k+1}^\star))-V(x(t(j_{k}^\star),j_{k}^\star))$. 

Since
\begin{align*}
&y(t(j_{k+1}^\star),j_{k+1}^\star) = f(x(t(j_{k+1}^\star),j_{k+1}^\star)) +n_s(t(j_{k+1}^\star),j_{k+1}^\star)\leq z(t(j_{k}^\star),j_{k}^\star) -\Delta(t(j_{k}^\star),j_{k}^\star)^{\frac{1}{\Delta(t(j_{k}^\star),j_{k}^\star)}}\\ 
&= f(x(t(j_{k}^\star),j_{k}^\star)) +n_s(t(j_{k}^\star),j_{k}^\star) -\Delta(t(j_{k}^\star),j_{k}^\star)^{\frac{1}{\Delta(t(j_{k}^\star),j_{k}^\star)}} \implies \\
&f(x(t(j_{k+1}^\star),j_{k+1}^\star)) \leq f(x(t(j_{k}^\star),j_{k}^\star)) +2\bar{n}_s -\Delta(t(j_{k}^\star),j_{k}^\star)^{\frac{1}{\Delta(t(j_{k}^\star),j_{k}^\star)}}\implies \\
&f(x(t(j_{k+1}^\star),j_{k+1}^\star))- f(x(t(j_{k}^\star),j_{k}^\star)) = \delta_f(t(j_{k+1}^\star),j_{k+1}^\star) \leq 2\bar{n}_s -\rho(\Delta(t(j_{k}^\star),j_{k}^\star)).
\end{align*}

If $2\bar{n}_s -\rho(\lambda_s\underline{\Phi})\leq0$, then $V(x(t(j_{k+1}^\star),j_{k+1}^\star))-V(x(t(j_{k}^\star),j_{k}^\star))\leq 0$ for all $k\in\mathbb{N}$. Indeed, semiglobal practical stability is preserved for all $n_s:\R\times\mathbb{N}\to\R$, with $n_s(t,j)\leq \bar{n}_s$ for all $(t,j)\in\R\times\mathbb{N}$ and such that
\begin{equation}
\bar{n}_s\leq \frac{\rho(\lambda_s\underline{\Phi})}{2}
\end{equation}

\bibliographystyle{ifacconf}
\bibliography{Arxiv_submission}	





\end{document}